\documentclass[10pt,onecolumn,prd,aps,showkeys]{revtex4-2}

\RequirePackage{amsmath,amssymb,amsthm}
\RequirePackage[dvipsnames,usenames]{color}
\usepackage{bm}
\usepackage[british]{babel}
\usepackage[latin1]{inputenc}
\usepackage[T1]{fontenc}
\usepackage[dvipsnames]{xcolor}

\usepackage{pgfplots}
\pgfplotsset{compat=1.18}
\usetikzlibrary{arrows.meta}
\usepgfplotslibrary{fillbetween}

\newtheorem{theorem}{Theorem}

\newtheorem{corollary}{Corollary}

\newcommand{\be}{\begin{equation}} 
	\newcommand{\ee}{\end{equation}}

\renewcommand{\leq}{\leqslant}
\renewcommand{\geq}{\geqslant}

\newcommand{\omegap}{\omega^{\prime}}

\newcommand{\Ltre}{L_{1}}
\newcommand{\Lquattro}{L_{2}}

\numberwithin{equation}{section}

\begin{document}
\title{Six-Field Rational Extended Thermodynamics of\\
Polyatomic Gases in Curved Spacetime}

	\author{Luca Gallerani}
	\email{luca.gallerani6@unibo.it}
	\affiliation{Department of Mathematics \& $\mathcal{AM}^2$, University of Bologna, Italy} 
	\author{Andrea Giusti}
	\email{andrea.giusti9@unibo.it}
	\affiliation{DIFA \& $\mathcal{AM}^2$, University of Bologna, Italy} 
	\affiliation{INFN, Sezione di Bologna, IS FLAG, Italy} 
    \author{Andrea Mentrelli}
	\email{andrea.mentrelli@unibo.it}
	\affiliation{Department of Mathematics \& $\mathcal{AM}^2$, University of Bologna, Italy} 
    \affiliation{INFN, Sezione di Bologna, IS FLAG, Italy} 
	\author{Tommaso Ruggeri}
	\email{tommaso.ruggeri@unibo.it}
	\affiliation{Department of Mathematics \& $\mathcal{AM}^2$, University of Bologna, Italy} 
	\affiliation{Accademia Nazionale dei Lincei, Rome, Italy} 
\begin{abstract}
\noindent
We formulate a generally covariant six-field Rational Extended Thermodynamics
model (RET$_6$) for relativistic polyatomic gases, with the dynamical pressure
as the only non-equilibrium variable. The model is based on a polyatomic
extension of the Boltzmann--Chernikov kinetic equation, where the one-particle
distribution depends also on an internal-energy variable, and on the Maximum
Entropy closure of the associated relativistic moment hierarchy. The resulting
field equations, closure relations, and production term are therefore fixed by
the underlying kinetic structure rather than postulated phenomenologically. We
extend the RET$_6$ model from Minkowski spacetime to a general curved spacetime
by the minimal coupling prescription and couple it to the Einstein equations.

\noindent
As a first structural result, we prove a kinetic-theory no-go theorem in this
polyatomic RET setting stating that any stress-energy tensor induced by a non-negative
relativistic one-particle distribution function satisfies the strong energy
condition. 

\noindent
We then specialize the theory to a homogeneous and isotropic
Friedmann--Lema\^{i}tre--Robertson--Walker (FLRW) spacetime. In this setting the
dynamical pressure modifies the expansion dynamics with respect to the
perfect-fluid Euler case, but the no-go theorem excludes acceleration driven by
the RET$_6$ gas alone. Finally, we reintroduce a cosmological constant and study
the combined $\Lambda$RET$_6$ model. For the diatomic equation of state and a
constant positive relaxation time, we prove the existence and local stability
of a de~Sitter attractor at late times. Numerical integrations show that, for
representative post-recombination initial data and constant relaxation times, the
expansion history rapidly approaches that of $\Lambda$CDM, with small
non-equilibrium corrections controlled by the relaxation time and by the
initial value of the dynamical pressure.
\end{abstract}
\keywords{Relativistic kinetic theory, Rational Extended Thermodynamics, curved spacetime, FLRW cosmology, strong energy condition, cosmological acceleration}
%
%
%
%
\maketitle
\newpage
%
%
%
%
\section{Introduction}

A central question in the theory of dissipative relativistic fluids
\cite{VereshchaginAksenov2017} is how to construct field equations that are simultaneously consistent with relativistic
causality, with the second law of thermodynamics, and with an underlying
kinetic description of the gas. Phenomenological approaches, such as those
based on M\"{u}ller--Israel-type theories
\cite{Muller1966,Israel1976}, typically postulate the
form of the dissipative fluxes and their relaxation equations a priori,
fixing free parameters by physical intuition or by comparison with kinetic
theory only a posteriori. Note that we use the designation {\em M\"{u}ller--Israel} to
reflect the historical priority. Indeed, it was M{\"u}ller \cite{Muller1966} who
first formulated, in his doctoral thesis, a causal theory of classical and relativistic dissipative
fluids with relaxation-type equations. The relativistic framework was
subsequently developed independently by Israel \cite{Israel1976}.

Rational Extended Thermodynamics (RET) \cite{MuellerRuggeri1998, newbook} offers a
structurally different route. The first relativistic RET theory was
formulated by Liu, M\"{u}ller, and Ruggeri \cite{LiuMuellerRuggeri1986} using the universal
principles of continuum mechanics (i.e., the entropy principle and the
concavity of the entropy density) as closure of a hierarchy of $14$ moment
equations derived from the Boltzmann--Chernikov kinetic equation
\cite{BGK, Synge, KC}, without postulating any constitutive relation a
priori. In this approach the field equations, the closure relations for the
higher-order moments, and the production terms are all determined by the
entropy structure of the system. A second, complementary closure strategy is the Maximum Entropy Principle
(MEP), whose use in the context of moment hierarchies has its own
history. In the classical framework, the MEP was first applied to rarefied
gas dynamics by Kogan \cite{kogan}, and subsequently used systematically
by Dreyer \cite{dreyer1}, by M\"{u}ller and Ruggeri \cite{ET}, by
Levermore \cite{levermore1996}, and by Boillat and Ruggeri
\cite{boillat_ruggeri1997}.

In the relativistic context, the MEP was first applied to monatomic gases by
Dreyer \cite{dreyer2}. The first relativistic RET theory for polyatomic gases was then developed by Pennisi and Ruggeri
\cite{PennisiRuggeri2017} through a double hierarchy of moments, accounting
for both the particle four-momentum and an additional internal-energy
variable. In particular, applying the MEP to the equilibrium subsystem yields
the polyatomic  analogue of the J\"uttner equilibrium
distribution, with the internal-energy variable and its density of states
entering the equilibrium distribution.
The construction was then carried over to the new hierarchy by Arima,
Carrisi, Pennisi, and Ruggeri \cite{Arima2022} for the 15-field theory.

 A detailed
historical account of the role of the MEP in RET is given in
\cite{MEP}. In both the entropy-principle approach and the MEP approach, the resulting closure relations coincide; moreover, they agree with those obtained by Marle \cite{Marle} by means of the relativistic counterpart of Grad's classical method \cite{Grad} (see  \cite{dreyer2,MuellerRuggeri1998,Arima2022}). This agreement confirms that the macroscopic closure is not postulated independently of the microscopic kinetic description.

For the broader mathematical and physical background on relativistic fluids
and magneto-fluids, including applications to astrophysics and plasma physics
and the analysis of waves and shock waves, we refer to Anile's classical
monograph \cite{Anile1989}. The use of the relativistic Boltzmann equation on
non-flat backgrounds has also been developed in the standard kinetic setting.
In particular, Kremer
obtained relativistic field equations from the Boltzmann equation in special
and general relativity, including applications to a homogeneous and isotropic
Universe and to a Schwarzschild spacetime \cite{KremerAIP2012}. He also
studied a relativistic gas in a Schwarzschild metric, deriving the dependence
of the transport coefficients and of the relativistic Fourier law on the
gravitational field \cite{KremerJSM2013}, and later analyzed diffusion in
relativistic gas mixtures in gravitational fields \cite{KremerPhysicaA2014}.
These works concern the standard relativistic kinetic framework rather than
the polyatomic RET$_6$ closure used below, but they provide useful kinetic
precedents for applying Boltzmann-type descriptions on curved spacetimes.

In particular, relativistic polyatomic gases admit a six-field model
(RET$_6$) \cite{Arima2022} in which, besides the particle
density and the four-velocity, the energy density and a single non-equilibrium
variable --- the dynamical pressure $\Pi$ --- appear as independent fields.
As we recall in Sec.~\ref{sec:RET}, RET$_6$ is not postulated directly but
arises as a {\em principal subsystem}, in the sense of Boillat and Ruggeri
\cite{Arma}, of the more general $15$-field relativistic kinetic theory of
polyatomic gases (RET$_{15}$) \cite{Arima2022}; this guarantees that it
inherits the convex entropy law and the subcharacteristic property of the
parent theory, and reduces the independent fields to
$(\rho, T, u^\alpha, \Pi)$.

The dynamical pressure represents the non-equilibrium correction associated
with the internal degrees of freedom of the gas, such as rotational and
vibrational modes in the polyatomic case. It is the only dissipative variable
that survives in a spatially homogeneous and isotropic configuration, and
therefore RET$_6$ provides the natural minimal extension of perfect-fluid
cosmology.

The present paper has two main goals. The
first, and primary, goal is theoretical: we extend the RET$_6$ model for
relativistic polyatomic gases from Minkowski spacetime to a general curved
spacetime by minimal coupling, establishing its general-covariant field
equations within the kinetically based RET closure obtained by the Maximum
Entropy Principle. The second goal is to illustrate the resulting general
theory through a significant application of broader interest, namely the
cosmological dynamics of a homogeneous and isotropic universe sourced by a
RET$_6$ gas.

Our first result is the kinetic-theory no-go theorem, which states that any
stress-energy tensor induced by a non-negative relativistic one-particle
distribution function satisfies the strong energy condition. The connection
between relativistic kinetic matter and the standard energy conditions is, of
course, classical; it goes back to the foundations of relativistic kinetic
theory and is routinely used in the Einstein--Vlasov literature
\cite{Ehlers1971,HawkingEllis1973,Wald1984,Andreasson2011,SarbachZannias2013}. The
point of the present theorem is therefore not to claim such positivity as a new
fact for kinetic theory in general, but to state and use the corresponding
pointwise consequence explicitly for the polyatomic RET hierarchy and its
curved-spacetime uplift. In particular, this gives a clean interpretation of
RET$_6$ in cosmology; namely, that as long as the closure remains an admissible kinetic
closure, the strong energy condition is inherited by the macroscopic model,
independently of the number of moments retained in the hierarchy.

As an application of the general theory, we specialize the curved-spacetime
RET$_6$ equations to a spatially flat Friedmann--Lema\^{i}tre--Robertson--Walker
(FLRW) spacetime and derive the corresponding cosmological evolution
equations. In the polyatomic case the dynamical pressure is genuinely
non-zero even in the classical limit, in contrast with the monatomic gas,
where it appears only as a relativistic correction of order
$\mathcal{O}(c^{-4})$ \cite{MuellerRuggeri1998}, $c$ being the speed of
light; this makes the polyatomic RET$_6$ model the natural candidate for a
non-trivial non-equilibrium correction to the cosmic expansion. As an
immediate corollary of the no-go theorem, an accelerated expansion cannot be
generated by the RET$_6$ gas alone. Thus, a cosmological constant (or some other
form of dark energy beyond standard kinetic matter) remains necessary. We
study the combined $\Lambda$RET$_6$ model, both analytically, establishing
the existence and local asymptotic stability of a de~Sitter attractor at late
times, and numerically, showing that for physically motivated
post-recombination parameter values the expansion history converges rapidly to
that of $\Lambda$CDM, while exhibiting
small but appreciable initial-condition-dependent deviations gauged by the
non-equilibrium dynamical pressure.

The structure of the paper is as follows. In Sec.~II we recall the RET$_6$
model for relativistic polyatomic gases on Minkowski spacetime and its
convective form. In Sec.~III we extend the model to a general curved spacetime
by minimal coupling and specialize it to the FLRW geometry. In Sec.~IV we
establish the kinetic-theory no-go theorem and its cosmological corollary. In
Sec.~V we reintroduce the cosmological constant, study the $\Lambda$RET$_6$
model, and establish the local stability of the de~Sitter attractor. In
Sec.~VI we present numerical solutions and compare the resulting expansion
history with $\Lambda$CDM. Finally, in Sec.~VII we present some concluding
remarks.
\section{The RET$_6$ model on Minkowski spacetime \label{sec:RET}}
We briefly recall the relativistic extended thermodynamics model with six
fields (RET$_6$) for a polyatomic gas in special relativity
\cite{Arima2022}. Rather than being postulated directly,
RET$_6$ is obtained from relativistic kinetic theory as a principal subsystem
of a larger fifteen-field theory. We therefore start from the
Boltzmann--Chernikov equation, recall the associated RET$_{15}$ moment system,
and then explain how RET$_6$ follows as a principal subsystem in the sense of
Boillat and Ruggeri \cite{Arma}.

We work on Minkowski spacetime, endowed in Cartesian coordinates with the
metric
\[
    \eta_{\mu\nu}=\mathrm{diag}(+1,-1,-1,-1).
\]
The four-velocity satisfies the normalization condition
\[
    u^\alpha u_\alpha=c^2,
\]
and the projection tensor onto the local rest space orthogonal to $u^\alpha$
is
\[
    h^{\alpha\beta}:=
    \frac{u^\alpha u^\beta}{c^2}-\eta^{\alpha\beta}.
\]
The particle mass density is denoted by $\rho=mn$, where $m$ is the rest mass
of a molecule and $n$ is the particle number density. The total energy density
is written as
\begin{equation}\label{eenergia}
    e=\rho c^2+\rho\varepsilon,
\end{equation}
where $\varepsilon$ is the specific internal energy. We also introduce the
dimensionless inverse temperature
\begin{equation}
    \gamma:=\frac{mc^2}{k_{\rm B}T},
\end{equation}
where $k_{\rm B}$ denotes Boltzmann's constant.

At the kinetic level, the monatomic relativistic gas is described by the
Boltzmann--Chernikov equation \cite{BGK,Synge,KC}. For polyatomic  gases, Pennisi and Ruggeri \cite{PennisiRuggeri2017} proposed an
extension in which the distribution function depends also on an internal
energy variable,
\[
    f\equiv f(x^\alpha,p^\beta,\mathcal I).
\]
The transport part keeps the same Boltzmann--Chernikov form,
\begin{equation}\label{BoltzR}
    p^\alpha \partial_\alpha f=Q,
\end{equation}
where $p^\alpha$ is the particle four-momentum and $Q$ denotes the
collision term. Its explicit form, based on the BGK-type model for
relativistic monatomic and polyatomic gases proposed by Pennisi and
Ruggeri~\cite{PennisiRuggeri2018}, will be specified below when discussing
the production term.
The indices $\alpha,\beta$ take the values $0,1,2,3$. The variable
$\mathcal I$ accounts for rotational and vibrational molecular modes, and the
density of internal states is denoted by $\phi(\mathcal I)$; see
\cite{newbook} and references therein.

Following the new hierarchy of moments introduced in \cite{Arima2022}, the
relativistic balance equations for a polyatomic gas are
\begin{equation}
    \partial_\alpha V^\alpha=0,\qquad
    \partial_\alpha T^{\alpha\beta}=0,\qquad
    \partial_\alpha A^{\alpha\beta\gamma}=I^{\beta\gamma}.
    \label{eq:RET15_moments}
\end{equation}
The corresponding moments are defined by
\begin{align}
\begin{split}
    V^\alpha
    &= mc\int_{\mathbb R^3}\!\!\int_0^{+\infty}
       f\,p^\alpha\,\phi(\mathcal I)\,d\mathcal I\,d\boldsymbol P,\\
    T^{\alpha\beta}
    &= \frac{1}{mc}\int_{\mathbb R^3}\!\!\int_0^{+\infty}
       f\,p^\alpha p^\beta\left(mc^2+\mathcal I\right)
       \phi(\mathcal I)\,d\mathcal I\,d\boldsymbol P,\\
    A^{\alpha\beta\gamma}
    &= \frac{1}{m^3c^3}\int_{\mathbb R^3}\!\!\int_0^{+\infty}
       f\,p^\alpha p^\beta p^\gamma
       \left(mc^2+\mathcal I\right)^2
       \phi(\mathcal I)\,d\mathcal I\,d\boldsymbol P,\\
    I^{\beta\gamma}
    &= \frac{1}{m^3c^3}\int_{\mathbb R^3}\!\!\int_0^{+\infty}
       Q\,p^\beta p^\gamma
       \left(mc^2+\mathcal I\right)^2
       \phi(\mathcal I)\,d\mathcal I\,d\boldsymbol P.
\end{split}
\label{eq:RET15_moments_def}
\end{align}
Here $d\boldsymbol P=d^3p/p_0$ is the Lorentz-invariant measure in momentum
space. The essential point of this hierarchy is that the $n$-th moment
contains the $n$-th power of the total molecular energy $mc^2+\mathcal I$,
namely the sum of the rest energy and the internal energy of the molecule.

This distinguishes the present model from the earlier relativistic
polyatomic-gas model of Pennisi and Ruggeri \cite{PennisiRuggeri2017}, where
the third-order moment involved the factor $mc^2+2\mathcal I$. In the newer
hierarchy of \cite{Arima2022}, the moments are built instead from powers of
the full molecular energy, in closer analogy with the usual moment
construction.
The closure of the system \eqref{eq:RET15_moments} is obtained by means of
the Maximum Entropy Principle \cite{kogan,dreyer1,ET,MEP}, following the
relativistic polyatomic-gas construction of Arima, Carrisi, Pennisi, and
Ruggeri \cite{Arima2022}. This yields the fifteen-field theory RET$_{15}$.
The balance laws involve the moments $V^\alpha$, $T^{\alpha\beta}$, and
$A^{\alpha\beta\gamma}$, while the independent fields of the closed theory
are most conveniently represented by the usual fourteen physical variables
\[
    \rho,\quad T,\quad u^\alpha,\quad \Pi,\quad q^\alpha,\quad
    t^{\langle\alpha\beta\rangle},
\]
together with the additional scalar non-equilibrium variable $\Delta$,
associated with the trace part of the third-order moment.

In
this way, RET$_{15}$ provides a macroscopic counterpart of the relativistic
Boltzmann--Chernikov equation \eqref{BoltzR} for polyatomic gases. Its
constitutive quantities are determined in terms of a scalar function
$\omega(\gamma)$, which encodes the thermal equation of state, together with
its derivatives \cite{Arima2022}.

A key point for the present work is that important reduced theories are not
introduced by additional phenomenological assumptions. They arise as
\emph{principal subsystems} of RET$_{15}$, in the precise sense of Boillat and
Ruggeri \cite{Arma}. A principal subsystem is obtained by setting a suitable
subset of the Lagrange multipliers equal to zero, suppressing the corresponding
balance equations, and expressing the associated non-equilibrium variables in
terms of the remaining fields. This construction preserves the main structural
properties of the parent theory: the reduced system still satisfies a convex
entropy law and inherits symmetric hyperbolicity. Moreover, the
subcharacteristic property follows automatically. More precisely, Boillat and
Ruggeri proved that the characteristic velocities of a principal subsystem lie
within the characteristic cone of the full system, i.e. the maximal characteristic
velocity of the subsystem is not larger than that of the parent system, while
its minimal characteristic velocity is not smaller.

There are two relevant principal subsystems of RET$_{15}$. The first one is
RET$_{14}$, obtained by imposing that the trace of the second-order Lagrange
multiplier vanishes,
\[
    \lambda^\alpha{}_\alpha=0.
\]
It retains the particle flux, the energy-momentum tensor, and the trace-free
part of the third-order balance law. Its independent fields are
\[
    (\rho,T,u^\alpha,\Pi,q^\alpha,t^{\langle\alpha\beta\rangle}),
\]
namely fourteen fields in total.

The second one is RET$_6$, which is the model used in this paper. It is
obtained by imposing the condition
\begin{equation}\label{lambda_cond}
    \lambda_{\langle\mu\nu\rangle}\equiv
    \lambda_{\mu\nu}
    -\frac{1}{4}\lambda^\alpha{}_\alpha\,g_{\mu\nu}=0,
\end{equation}
that is, the trace-free part of the second-order Lagrange multiplier is set
to zero. This condition forces the heat flux and shear stress to vanish,
\[
    q^\alpha=0,\qquad
    t^{\langle\mu\nu\rangle}=0,
\]
and leaves only the scalar non-equilibrium variable, the dynamical pressure
$\Pi$. In addition, one obtains the algebraic relation
\begin{equation}\label{Delta6}
    \Delta^{(6)}=w\,\Pi,
    \qquad
    w=4c^2\,\frac{D^{44}_4-3D^{43}_4}{D^{34}_4-3D^{33}_4},
\end{equation}
where $D^{ij}_4$ are the minor determinants of the matrix $D_4$ defined in
\cite[Sec.~7.2]{Arima2022}. We stress that \eqref{Delta6} differs from the
corresponding relation in RET$_{14}$.

The independent fields of RET$_6$ are therefore
\[
    (\rho,T,u^\alpha,\Pi),
\]
namely six fields, and the surviving balance laws are
\begin{equation}\label{eq:trace}
    \partial_\alpha V^\alpha=0,\qquad
    \partial_\alpha T^{\alpha\beta}=0,\qquad
    \partial_\alpha A^{\alpha\beta}{}_{\beta}=I^\beta{}_{\beta}.
\end{equation}
It is important to emphasize that, unlike in the classical theory, RET$_{14}$
and RET$_6$ are at the \emph{same level} in the relativistic theory, meaning that 
both are direct principal subsystems of RET$_{15}$ \cite{Arima2022}. Thus RET$_6$ is not
obtained as a subsystem of RET$_{14}$, as it happens in the classical limit.
Rather, RET$_{14}$ and RET$_6$ are obtained directly from RET$_{15}$ by
imposing different frozen conditions on the Lagrange multipliers. This is a genuinely
relativistic feature, related to the coupling between the scalar part
$\Delta$ and the tensorial part associated with the third-order moment
$A^{\alpha\beta\gamma}$.

Physically, the dynamical pressure $\Pi$ represents the non-equilibrium
correction associated with the internal degrees of freedom of the gas, such as
rotational and vibrational modes in the polyatomic case. It is the only
dissipative variable that survives in a spatially homogeneous and isotropic
configuration, since heat flux and shear stress vanish by symmetry
\cite{Ellisetal:2012}. For this reason RET$_6$ is the natural minimal
extension of the perfect-fluid Euler theory capable of describing
bulk-viscosity-type dissipation in a cosmological setting. At the same time,
the model remains determined by the underlying relativistic kinetic theory of
polyatomic gases once the density of internal states $\phi(\mathcal I)$ and
the kinetic collision model, including the relaxation time, have been
specified.

For the reader's convenience, we now collect the explicit expressions used in
the sequel. The particle four-vector and the energy-momentum tensor are
\begin{align}\label{Tab}
\begin{split}
    V^\alpha &= \rho u^\alpha,\\
    T^{\alpha\beta}
    &= \frac{e}{c^2}u^\alpha u^\beta+(p+\Pi)h^{\alpha\beta}\\
    &= \frac{e+p+\Pi}{c^2}u^\alpha u^\beta
       -(p+\Pi)\eta^{\alpha\beta},
\end{split}
\end{align}
where $p$ is the equilibrium pressure.

For the evaluation of the production term, we adopt the BGK-type model
for relativistic monatomic and polyatomic gases proposed by Pennisi and
Ruggeri~\cite{PennisiRuggeri2018}, with constant relaxation time $\tau$.
In its general form, the collision term reads
\[
Q=
\frac{u_\alpha p^\alpha}{c^{2}\tau}
\left[
f_E-f
-
f_E\,
\frac{p^\mu q_\mu}{bmc^{2}}
\left(1+\frac{\mathcal{I}}{mc^{2}}\right)
\right],
\]
where $f_E$ denotes the local equilibrium distribution, $q^\mu$ is the
heat flux, and $b$ is a positive coefficient whose explicit expression
is given in Ref.~\cite{PennisiRuggeri2018}. Since the RET$_6$ principal
subsystem satisfies $q^\mu=0$, the collision term reduces to
\[
Q=
\frac{u_\alpha p^\alpha}{c^{2}\tau}(f_E-f).
\]
Accordingly, the trace of the third-order moment and the corresponding
production term are~\cite{Arima2022}
\begin{equation}\label{triplo}
    A^{\alpha\beta}{}_{\beta}
    =
    \bigl[\rho c^2(\theta_{0,2}-\theta_{1,2})+A_1\Pi\bigr]u^\alpha,
    \qquad
    I^\beta{}_{\beta}
    =
    -\frac{A_1}{\tau}\Pi.
\end{equation}
The equilibrium pressure and total energy density are
\begin{equation}\label{statef}
    p:=\frac{\rho c^2}{\gamma},
    \qquad
    e:=\rho c^2\omega,
\end{equation}
where $\omega=\omega(\gamma)$ encodes the thermal equation of state. The
auxiliary functions appearing in the balance laws are
\begin{equation}
    \theta_{0,2}=\omega^2-\omega',
    \qquad
    \theta_{1,2}=\frac{3(\gamma\omega+1)}{\gamma^2},
\end{equation}
where the prime denotes differentiation with respect to $\gamma$ \footnote{In
Ref.~\cite{Arima2022}, p.~20, two typographical errors are present. In
Eq.~(72), the factor $\Pi$ should appear inside the bracket after $A_1$, as in
Eq.~\eqref{triplo} of the present paper. Furthermore, in the definition of
$A_1$ between Eqs.~(72) and (73), the last two terms in the numerator are
written with the opposite sign. These misprints do not affect the results of
Ref.~\cite{Arima2022}.}.
Finally,
\begin{equation}
    A_1 =
    -\frac{
        2\gamma^6\omega^{\prime 3}
        +6\gamma^4\omega^{\prime 2}
        +\left(-\gamma^3\omega''+3\gamma\omega+6\right)^2
        -\gamma^2\omega'
        \left(\gamma^4\omega'''-6\gamma^2\omega^2
        +3\gamma\omega+36\right)}
    {\gamma\left(
        -\gamma^3\omega''
        +\gamma^2(2\gamma\omega-3)\omega'
        +3\gamma\omega+6
    \right)}.
\end{equation}
For the two relativistic caloric equations of state whose classical limits
correspond to monatomic and diatomic gases, respectively, one has
\cite{Arima2022}
\begin{equation}
    \omega_{\rm mono}
    =
    {\cal G}-\frac{1}{\gamma},
    \qquad
    \omega_{\rm dia}
    =
    \frac{1}{\gamma}
    +\frac{\gamma}{\gamma{\cal G}-4},
\end{equation}
with
\begin{equation}
    {\cal G}=\frac{K_3(\gamma)}{K_2(\gamma)},
\end{equation}
where $K_n(\gamma)$ denotes the modified Bessel function of the second
kind of order $n$; see, for instance, \cite{Abramowitz1965handbook}.
Here and in the following, the labels ``monatomic'' and ``diatomic'' refer
to the classical limits of the corresponding relativistic caloric equations
of state: as $\gamma\to\infty$, they recover the specific internal energies
of classical monatomic and diatomic gases, respectively. At finite
relativistic temperatures, however, the specific heats are nonlinear
functions of temperature, owing to the dependence on modified Bessel
functions, and are therefore not constant.

For these two relativistic caloric equations of state, $\omega(\gamma)$ can
be expressed in the closed forms given above, whereas for more general
polyatomic gases its expression involves integrals containing modified Bessel
functions; see Ref.~\cite{Arima2022} for details. In the following, whenever
an explicit equation of state is required, and in particular for the
determination of the hyperbolicity region shown in Fig.~\ref{dominio}, we
specialize to the case $\omega=\omega_{\rm dia}$.

%
%
%
\subsection{Convective form of the system}
\label{sec:convective}
Introducing the material derivative $\dot f := u^\alpha\partial_\alpha f$, which in the context of relativity is simply the total derivative with respect to the proper time of an observer comoving with $u^\alpha$, the system presented in the previous section can be rewritten in a simpler form. 

The continuity equation $\partial_\alpha V^\alpha = 0$ yields
\begin{equation}
    \label{eqs:rho}
	\dot\rho + \rho\,\partial_\alpha u^\alpha = 0 \, .
\end{equation}
The projection of the conservation law of the energy-momentum tensor in the direction of $u^\alpha$, i.e. $u_\beta\,\partial_\alpha T^{\alpha\beta}=0$, reads
\begin{equation}
    \label{eqs:e}
	\dot e + (e+p+\Pi)\,\partial_\alpha u^\alpha = 0 \, ;
\end{equation}
while the projection of the conservation equation for the energy-momentum tensor on the 3-space orthogonal to $u^\alpha$, i.e. $h^{\beta} {}_\gamma\,\partial_\alpha T^{\alpha\gamma}=0$ can be recast as
\begin{equation}
    \label{eqs:a}
    \frac{e+p+\Pi}{c^2}\,\dot u^\beta + h^{\alpha\beta}\partial_\alpha(p+\Pi) = 0 \, .
\end{equation}
Finally, from Eq.~\eqref{eq:trace} and Eq.~\eqref{triplo} one gets the relaxation equation for the dynamical pressure, that reads
\begin{equation}
    \label{eqs:Pi}
	\dot\Pi + \frac{c^2\rho(\theta_{0,2}'-\theta_{1,2}')}{A_1}\,\dot\gamma
    + \frac{\dot A_1}{A_1}\,\Pi + \Pi\,\partial_\alpha u^\alpha = -\frac{\Pi}{\tau} \, .
\end{equation}
Eqs.~\eqref{eqs:rho}--\eqref{eqs:Pi} thus determine the so-called {\em convective form} of the RET$_{6}$ system.

%
%
%
\subsection{Characteristic velocities, hyperbolicity and causality}
\label{sec:charvel}
Besides the material characteristic $\lambda=0$, which has multiplicity four
in the local rest frame, the RET$_6$ system possesses two nonzero longitudinal
characteristic velocities,
\[
    \lambda_{\pm}=\pm\lambda.
\]
Their common squared dimensionless magnitude is given by
(see Appendix~\ref{app:char_vel})
\begin{equation}\label{lambda2a}
    \lambda^2
    =
    \frac{p+\Pi}{e+p+\Pi}
    \left[
        k+\frac{\gamma^2}{\hat c_V}
        \left(\Ltre\bar{\Pi}+\Lquattro\right)
    \right].
\end{equation}
where 
\be 
\label{eq:dimlessPi}
\bar{\Pi} := \frac{\Pi}{\rho \, c^2}
\ee 
is the dimensionless dynamical pressure,
\begin{equation}
    \hat{c}_V = \frac{m c_V}{k_{\rm B}}, \qquad 
    k = \frac{c_p}{c_V} = 1 + \frac{1}{\hat{c}_V}
\end{equation}
being $c_V = d\varepsilon/dT$ and $c_p$, respectively, the specific heat at constant volume and at constant pressure, and  
\begin{equation} \label{eq:L3L4}
    \Ltre := \frac{A_1'}{A_1}, \qquad
    \Lquattro := \frac{\theta_{0,2}^\prime - \theta_{1,2}^\prime}{A_1}\,,
\end{equation}
with $L_1=L_1(\gamma)$ and $L_2=L_2(\gamma)$. 
Taking into account \eqref{eenergia} and \eqref{statef}, we have
\begin{equation}
    \hat{c}_V = -\gamma^2 \omegap, \qquad
    k = 1 - \frac{1}{\gamma^2 \omegap},    
\end{equation}
and Eq.~\eqref{lambda2a} can be rewritten explicitly as
\begin{equation}\label{lambda2}
    \lambda^2 = -\frac{\left(\gamma \bar{\Pi} + 1\right) \left[ -\gamma^2 \omegap + \gamma^2 \left(\Ltre \bar{\Pi} + \Lquattro\right) + 1 \right]}{\gamma^2 \omegap \left( \gamma \, \omega + \gamma \bar{\Pi} + 1 \right)}.
\end{equation}
As is customary in RET, the theory is valid in a neighborhood of the
equilibrium state $\bar{\Pi}=0$. Indeed, the Maximum Entropy Principle gives a
distribution function expressed in terms of Lagrange multipliers, which are
the main field variables symmetrizing the system. Since the map between these
multipliers and the physical variables is nonlinear, the explicit closure is
obtained by expanding the distribution function near equilibrium and
linearizing this map with respect to the non-equilibrium variables. The
resulting closure is therefore local: the approximated distribution function
need not remain positive far from equilibrium, and the convexity of the
entropy, hence symmetric hyperbolicity, is guaranteed only in a suitable
neighborhood of equilibrium; see \cite{ET,MuellerRuggeri1998,Arima2022}.
At the same time, not all restrictions of the model should be interpreted as
mere consequences of hyperbolicity. Some of them are inherited directly from
the underlying kinetic construction and from the admissibility properties of
the distribution function. This point will be important below, where the
no-go result follows from the kinetic structure of the model rather than from
a loss of hyperbolicity alone.

At equilibrium, the RET$_6$ longitudinal characteristic velocity satisfies
$0<\lambda_{\mathrm{Eq}}^2<1$, so that the equilibrium system is both
hyperbolic and causal.
For the diatomic equation of state $\omega=\omega_{\rm dia}$, the
physical hyperbolicity and causality domain
\[
\{\,0<\lambda^2<1\,\}
\]
is the region depicted in Fig.~\ref{dominio}, delimited by the two continuous curves.
 \begin{figure}
	\centering
	\begin{tikzpicture}
		\begin{axis}[
			axis lines = left,
			axis line style = {-{Stealth[scale=1.2]}, thick, gray!60!black},
			xlabel = {$\gamma$},
			ylabel = {$\bar\Pi$},			
			xlabel style = {
				at={(axis description cs:1,0)},
				anchor=north west, 
				font=\small,
				xshift=0cm,
				yshift=0.1cm
			},
			ylabel style = {
				at={(axis description cs:0,1)},
				anchor=south east, 
				font=\small, 
				rotate=-90,
				xshift=0.5cm,
				yshift=-0.2cm
			},
			xmin = 0, xmax = 10.5,
			ymin = -10, ymax = 16,
			grid = major,
			grid style = {thin, gray!15},
			legend cell align = {left},
			legend style = {
				at={(0.95,0.95)}, 
				anchor=north east, 
				draw=gray!40, 
				fill=white, 
				font=\footnotesize,
				row sep=2pt
			},
			width = 11cm,
			height = 7.5cm,
			restrict y to domain=-12:45,
			clip = true
			]
			
			\addplot[gray!60, forget plot] coordinates {(0,0) (10,0)};
			
			\addplot[
			dashed, 
			cyan!70!black,
			very thick, 
			domain=0:10, 
			forget plot
			] {25/27};
			
			\addplot[
			name path=ramoPosSoffitto,
			forget plot,
			opacity=0,
			smooth
			] table[y expr={min(\thisrowno{1}, 15)}] {fig_boundary_pos.dat};
			
			\addplot[
			name path=ramoNeg,
			forget plot,
			opacity=0,
			smooth
			] table {fig_boundary_neg.dat};
			
			\addplot[
			fill=gray!20,     
			opacity=0.5,      
			forget plot        
			] fill between [
			of=ramoPosSoffitto and ramoNeg, 
			on layer={pre main} 
			];
			
			\fill[gray!20, opacity=0.5] 
			(axis cs:0, -10) -- (axis cs:0.1, -10) -- (axis cs:0.1, 15) -- (axis cs:0, 15) -- cycle;
			
			\clip (axis cs:0,-10) rectangle (axis cs:10.5,15);
			
			\addplot[
			color=cyan!70!black,
			ultra thick,
			smooth,
			mark=none
			] table {fig_boundary_pos.dat};
			
			\addplot[
			color=cyan!70!black,
			ultra thick,
			smooth,
			mark=none
			] table {fig_boundary_neg.dat};			
		\end{axis}
	\end{tikzpicture}
    \caption{Region in the $(\gamma,\bar{\Pi})$-plane for which $0 < \lambda^2 < 1$.
		 The lower curve marks the boundary $p+\Pi=0$; the dashed line
		$\bar{\Pi} = 25/27$ is the upper bound and $\bar{\Pi}=0$ the lower bound
		in the limit $\gamma\to\infty$.}                            
	\label{dominio}
\end{figure}
In the ultrarelativistic limit, i.e. for $\gamma\to 0$, Eq.~\eqref{lambda2} reduces to
\begin{equation}
	\lambda^2 = \frac{1}{3} + \frac{\gamma\bar{\Pi}}{6},
\end{equation}
while in the classical limit, i.e. for $\gamma\to\infty$, restoring dimensional velocity $\tilde\lambda = c\lambda$, one recovers the classical RET$_6$ result \cite{ET6}
\begin{equation}
	\tilde\lambda^2 = \frac{5}{3}\,\frac{p+\Pi}{\rho}.
\end{equation}
%
%
%
\subsection{Subcharacteristic condition \label{sec:subchar}}
The Euler system is a principal subsystem of the RET$_6$ model. Since the entropy is convex at equilibrium, by a theorem stated in \cite{Arma} the subcharacteristic condition holds, namely
\begin{equation}
    \lambda_{\mathrm{Eq}} > \lambda_{\mathrm{Eul}}, \qquad \forall \gamma>0,
\end{equation}
where $\lambda_{\mathrm{Eq}}$ denotes the equilibrium value of \eqref{lambda2}, and $\lambda_{\mathrm{Eul}}$ is the characteristic velocity of the Euler fluid \cite{JMP}. Explicitly, one has
\begin{equation}
    \lambda_{\mathrm{Eq}}^{2} = \frac{p}{e+p} \left( k + \Lquattro\,\frac{\gamma^{2}}{\hat{c}_V} \right),
    \qquad
    \lambda_{\mathrm{Eul}}^{2} = k\,\frac{p}{e+p}.
\end{equation}
Indeed, the inequality follows from the positivity of the additional term, since
\begin{equation}
    \Lquattro\,\frac{\gamma^{2}}{\hat{c}_V} > 0, \qquad \forall \gamma > 0.    
\end{equation}
%
%
%
\section{RET$_6$ on curved spacetimes}
\label{sec:FLRW}
According to the {\em minimal coupling principle}, the standard way to pass
from the special-relativistic equations to the generally covariant ones
consists of: (i) replacing the Minkowski metric with the spacetime metric,
i.e. $\eta_{\mu\nu}\to g_{\mu\nu}$; (ii) replacing partial derivatives with
covariant derivatives, i.e. $\partial_\mu\to\nabla_\mu$; (iii) requiring that
the local structure of the constitutive equations remains the same as in
special relativity, so that curvature tensors do not enter explicitly in the
constitutive relations of the generally covariant uplift of the theory.

A related kinetic treatment of relativistic polyatomic gases in a gravitational
field was developed in \cite{OliveiraRamosSoares2022}, where Minkowski and
FLRW spacetimes were considered and the dynamical pressure was obtained by a
Chapman--Enskog expansion. That work is based on an earlier relativistic
polyatomic kinetic model \cite{PennisiRuggeri2017}. Here, consistently with
Sec.~\ref{sec:RET}, we use the RET$_6$ model derived from the more recent
relativistic moment hierarchy \cite{Arima2022}, and we retain its hyperbolic
balance-law structure rather than passing to a parabolic Eckart-type
Chapman--Enskog regime.

The generally covariant convective form of RET$_6$ then reads
\begin{align}
    &\dot\rho + \Theta\rho = 0, \label{eq:cont_FLRW} \\
    &\dot e + \left(e + p + \Pi\right) \Theta = 0,
        \label{eq:energy_FLRW} \\
    &\frac{e + p + \Pi}{c^2}\,\dot u^\beta
        + h^{\alpha\beta}\nabla_\alpha\left(p + \Pi\right) = 0,
        \label{eq:euler_FLRW} \\
    &\dot\Pi
        + c^2\rho \Lquattro\,\dot\gamma
        + \Ltre \dot\gamma \,\Pi
        + \Pi\,\Theta
        = -\frac{\Pi}{\tau}.
        \label{eq:Pi_FLRW}
\end{align}
where $\dot\psi = u^\alpha\nabla_\alpha\psi$ for any scalar field $\psi$,
$\Theta := \nabla_\alpha u^\alpha$ is the expansion scalar, and
$h^{\alpha\beta} = u^\alpha u^\beta/c^2 - g^{\alpha\beta}$.

The FLRW metric in comoving coordinates reads
\begin{equation}
	\mathrm{d}s^2 = c^2\mathrm{d}t^2
	- a^2\left(t\right) \left(\frac{\mathrm{d}r^2}{1-Kr^2} + r^2\mathrm{d}\Omega^2\right),
\end{equation}
with $a(t)$ being the scale factor, $H := \dot a/a$ is the Hubble parameter,
$K$ is the constant spatial curvature parameter, and $\Theta=3H$. Using
$x^0=ct$ as the temporal coordinate, the four-velocity of observers comoving
with the cosmological fluid is $u^\mu=(c,\bm0)$. Consequently,
$u^\alpha\nabla_\alpha\psi={\rm d}\psi/{\rm d}t$ for every homogeneous
scalar field $\psi$, so that the dot reduces to differentiation with respect
to the comoving time $t$. Moreover, $\dot u^\alpha=0$ and
$h^{\alpha\beta}\nabla_\alpha(p+\Pi)=0$. Hence, describing the cosmic fluid
using RET$_6$ allows us to rewrite the system
\eqref{eq:cont_FLRW}--\eqref{eq:Pi_FLRW} as 
\begin{align}
	&\dot\rho + 3H\rho = 0, \label{eq:cont_red} \\
	&\dot e + 3H\left(e + p + \Pi\right) = 0, \label{eq:energy_red} \\
	&\dot\Pi + \left(\Ltre \dot\gamma + 3H + 
        \frac{1}{\tau}\right) \Pi + c^2\rho \Lquattro\,\dot\gamma = 0 \, . \label{eq:Pi_red}
\end{align}
Thus, the RET$_6$ acts as a source term of the Einstein field equations that, specialized to FLRW cosmology read \cite{Ellisetal:2012}
\begin{align}
	& \frac{\ddot a}{a} = -\frac{4\pi G}{3c^2}\big[e + 3\left(p + \Pi\right)\big] \, , \label{eq:Friedmann1b} \\
    & H^2 + \frac{K c^2}{a^2} = \frac{8\pi G}{3c^2} e \, ,
	\label{eq:Friedmann2b}
\end{align}
known respectively as the Raychaudhuri equation (also known as the second Friedmann equation, or as the acceleration equation) and the Friedmann constraint equation (or energy constraint, or also as the first Friedmann equation), with $G$ denoting Newton's gravitational constant.

Current cosmological observations strongly constrain the spatial curvature to
be close to zero (see e.g. \cite{Ellisetal:2012,Planck2020}), and in the
following we restrict our discussion to spatially flat FLRW geometries, i.e. we
set $K=0$.

It is now convenient to rewrite the full system in dimensionless form. We
use $t=0$ to denote an arbitrary reference epoch and set
\begin{equation}
    a_0:=a(0), \qquad H_0:=H(0), \qquad \bar t:=H_0t.
\end{equation}
Thus, the subscript $0$ denotes quantities evaluated at the chosen reference
(or initial) epoch; present-day quantities used below will instead be denoted
by an asterisk. We introduce
\begin{equation}
    e_0 := \rho_0 c^2\,\omega_0, \qquad
    \omega_0 := \omega\left(\gamma_0\right), \qquad
    \gamma_0 := \frac{mc^2}{k_{\rm B} \,T_0}, \qquad
    T_0 := T(0).
\end{equation}
The dimensionless matter density parameter at the reference epoch is
\begin{equation}
\label{eq:OmegaM0}
    \Omega_{{\rm M},0} := \frac{8\pi G}{3 c^2 H_0^2} \,e_0.
\end{equation}
Integration of the continuity equation $\dot\rho+3H\rho=0$ gives
\begin{equation}
    \rho(t)=\rho_0\left(\frac{a_0}{a(t)}\right)^3,
    \label{eq:rho_a}
\end{equation}
and therefore
$e=\rho_0c^2(a_0/a)^3\omega(\gamma)$. Combining this relation with
Eqs.~\eqref{eq:Friedmann2b} and \eqref{eq:OmegaM0}, we obtain
\begin{equation}
    \mathcal{H}(a,\gamma) := \frac{H}{H_0}
    = \sqrt{\Omega_{{\rm M},0}
    \left(\frac{a_0}{a}\right)^3
    \frac{\omega(\gamma)}{\omega_0}},
    \label{eq:Hhat0}
\end{equation}
which satisfies $\mathcal{H}(a_0,\gamma_0)=1$. For the spatially flat
matter-only model considered in this section, the Friedmann constraint at the
reference epoch gives $\Omega_{{\rm M},0}=1$.

Now, from Eqs.~\eqref{eq:dimlessPi} and \eqref{statef} we can easily infer that
\be 
p + \Pi = \rho c^2\!\left(\frac{1}{\gamma} + \bar{\Pi}\right).
\ee
Thus the rate of change of $\gamma$ is obtained from the energy
equation~\eqref{eq:energy_red}, by substituting 
$e = \rho_0 c^2(a_0/a)^3\omega(\gamma)$ and recalling that
$\dot\rho = -3H\rho$. This yields
\begin{equation}
    \frac{\dot\gamma}{H_0} =
    -\frac{3\,\mathcal{H}(a,\gamma)}{\omega'(\gamma)}
    \left(\frac{1}{\gamma} + \bar{\Pi}\right).
    \label{eq:gdot_Lambda}
\end{equation}
Furthermore, the relaxation equation~\eqref{eq:Pi_red}, expressed in terms of $\bar{\Pi}$ {\em via} $\dot\Pi = \rho c^2\dot{\bar{\Pi}} + \dot\rho c^2\bar{\Pi}$ and $\dot\rho = -3H\rho$, becomes
\begin{equation}
    \frac{{\rm d}\bar{\Pi}}{{\rm d} \bar{t}} =
    -\left(\frac{\Ltre\,\dot\gamma}{H_0} + \frac{1}{\bar\tau}\right)\,\bar{\Pi}
    - \frac{\Lquattro\,\dot\gamma}{H_0},
    \label{eq:ode_pi_Lambda}
\end{equation}
where $\bar\tau := \tau H_0$ is the dimensionless relaxation time. 

Substituting~\eqref{eq:gdot_Lambda} into~\eqref{eq:ode_pi_Lambda}, the full
system \eqref{eq:cont_red}--\eqref{eq:Pi_red} can be rewritten in dimensionless form as
\begin{align}
    \frac{{\rm d} a}{{\rm d}\bar{t}} &= \mathcal{H}(a,\gamma)\,a,
    \label{eq:ode_a_Lambda}\\[4pt]
    \frac{{\rm d}\gamma}{{\rm d}\bar{t}} &=
    -\frac{3\,\mathcal{H}(a,\gamma)}{\omega'(\gamma)}
    \left(\frac{1}{\gamma}+\bar{\Pi}\right),
    \label{eq:ode_gamma_Lambda}\\[4pt]
    \frac{{\rm d}\bar{\Pi}}{{\rm d}\bar{t}} &=
    \frac{3\,\mathcal{H}(a,\gamma)}{\omega'(\gamma)}
    \left(\frac{1}{\gamma}+\bar{\Pi}\right)
    \!\bigl(\Ltre\,\bar{\Pi} + \Lquattro\bigr)
    - \frac{\bar{\Pi}}{\bar\tau}.
    \label{eq:ode_pi_Lambda2}
\end{align}
Note that the case of the Euler fluid is recovered from the latter system by setting $\bar{\Pi} = 0$ and eliminating the last equation. The resulting Euler system reads
\begin{align}\label{eq:euler}
\begin{split}
   &   \frac{da}{d\bar{t}} = \mathcal{H}(a,\gamma)\,a, \\
   & \frac{d\gamma}{d\bar{t}}  =
    -\frac{3\,\mathcal{H}(a,\gamma)}{\gamma \, \omega'(\gamma)}.
\end{split}   
\end{align}

Notably, it is often worth rewriting our system of equations using the scale factor as independent variable. This can be done by observing that
$$
\frac{{\rm d}}{{\rm d} \bar{t}} = a \, \mathcal{H} \, \frac{{\rm d}}{{\rm d} a} \, .
$$
For instance, this procedure yields the system
\begin{align}
    \frac{{\rm d}\gamma}{{\rm d}a} &=
    -\frac{3}{a\,\omega'(\gamma)}
    \left(\frac{1}{\gamma} + \bar{\Pi}\right),
    \label{eq:ode_gamma_a}\\[4pt]
    \frac{{\rm d} \bar{\Pi}}{{\rm d} a} &=
    \frac{3}{a\,\omega'(\gamma)}
    \left(\frac{1}{\gamma}+\bar{\Pi}\right)
    \bigl(L_1\,\bar{\Pi} + L_2\bigr)
    - \frac{\bar{\Pi}}{\bar\tau\,\mathcal{H}(a,\gamma)\,a},
    \label{eq:ode_pi_a}
\end{align}
that, in the case of the Euler fluid reduces to a single equation, i.e.
\begin{equation}
    \frac{{\rm d}\gamma}{{\rm d}a} =
    -\frac{3}{a\,\gamma \,\omega'(\gamma)}.
\end{equation}
This last equation for the Euler fluid can be easily integrated and yields
\be 
a(\gamma) = a_0 \, \exp \left[ \frac{1}{3} \left( - \gamma \, \omega(\gamma) + \gamma_0 \, \omega_0 + \int_{\gamma_0} ^\gamma \omega (\xi) \, {\rm d}\xi \right) \right].
\ee
This expression differs from a power-law behavior which is typical of a FLRW universe that is filled with a single, adiabatically expanding perfect fluid obeying a constant barotropic equation of state, representing the typical toy scenario in cosmology often used to study exact (closed form) solutions.

%
%
%
%
\section{A polyatomic kinetic-theory no-go theorem}
\label{sec:nogo}
In this section we establish a structural constraint imposed by the kinetic
origin of the macroscopic stress-energy tensor. For ordinary monatomic
relativistic kinetic matter, the fact that a stress-energy tensor induced by a
non-negative distribution function satisfies the standard energy conditions is a
classical result of relativistic kinetic theory and of the Einstein--Vlasov
framework
\cite{Ehlers1971,HawkingEllis1973,Wald1984,Andreasson2011,SarbachZannias2013}.
The result needed here is, however, not simply the monatomic statement. In the
present theory the distribution function is defined on the extended polyatomic
phase space,
\[
    f=f(x^\alpha,p^\beta,\mathcal I),
\]
where the additional variable $\mathcal I$ accounts for the internal molecular
degrees of freedom, and the stress-energy tensor contains the polyatomic weight
\[
    \left(mc^2+\mathcal I\right)\phi(\mathcal I).
\]
Thus, although the algebraic mechanism of the proof is analogous to the
classical one, the tensor to which it is applied is the polyatomic kinetic
stress-energy tensor underlying the RET hierarchy.

The theorem below shows that the same local positivity mechanism survives in
the polyatomic setting. The argument is independent of the particular finite
moment closure and therefore applies to RET$_{15}$, to its principal subsystems,
and to higher-order moment theories, as long as the stress-energy tensor is
induced by a non-negative polyatomic distribution function. The specific role of
RET$_6$ appears only in the FLRW application. By homogeneity and isotropy, heat
flux and shear stress vanish, and the only remaining dissipative scalar is the
dynamical pressure $\Pi$.
\begin{theorem}[Polyatomic kinetic-theory no-go theorem]
\label{th-1}
Let $x$ be an arbitrary spacetime event and consider a local inertial frame at
$x$. Let
\[
    f=f(x^\alpha,p^\beta,\mathcal I)\geq 0
\]
be an admissible one-particle distribution function for a relativistic
polyatomic gas. Assume that $p^\alpha$ is future-directed and satisfies the
mass-shell condition
\[
    p^\alpha p_\alpha=m^2c^2
\]
in the local frame, and let $\phi(\mathcal I)\geq0$ be the density of internal
states. If the stress-energy tensor is given by the second kinetic moment in
\eqref{eq:RET15_moments_def}, then $T^{\alpha\beta}$ satisfies the strong energy
condition pointwise at the event $x$.

Moreover, when the corresponding relativistic polyatomic gas model is extended
to homogeneous and isotropic curved spacetimes through the minimal-coupling
prescription, the same local argument applies in every local inertial frame. In
particular, when such a relativistic polyatomic gas is used as the matter source
in an FLRW cosmological model, the stress-energy tensor takes the form
\begin{equation}
\label{eq:Tiso_nogo}
    T^{\alpha\beta}
    =
    \frac{e}{c^2}u^\alpha u^\beta
    +
    (p+\Pi)h^{\alpha\beta},
\end{equation}
and, for non-vacuum matter states,
\begin{equation}
\label{eq:SEC_FLRW}
    e>0,
    \qquad
    p+\Pi\geq0,
    \qquad
    e+3(p+\Pi)>0 .
\end{equation}
This conclusion is independent of the particular closure and hence holds for
RET$_{15}$, for its principal subsystems, and for higher-order moment theories
with the same polyatomic kinetic stress-energy tensor.
\end{theorem}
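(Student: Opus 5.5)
The plan is to verify the strong energy condition directly from the kinetic integral, using only the sign of $f$, of $\phi$, and of the weight $mc^2+\mathcal I$. We work in the chosen local inertial frame at $x$, where the metric is $\eta_{\mu\nu}$.

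\textbf{Form of the condition.} In signature $(+,-,-,-)$ the strong energy condition reads
\[
\Big(T_{\alpha\beta}-\tfrac12 T\,g_{\alpha\beta}\Big)v^\alpha v^\beta\geq 0
\]
for every timelike (and, by continuity, causal) $v^\alpha$, with $T:=T^\alpha{}_\alpha$.

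\textbf{Pointwise integrand.} Inserting the second moment of \eqref{eq:RET15_moments_def} and using the mass shell $p^\alpha p_\alpha=m^2c^2$, the trace is
\[
T=\frac{1}{mc}\int\!\!\int f\,m^2c^2\,(mc^2+\mathcal I)\,\phi\,d\mathcal I\,d\boldsymbol P .
\]
Hence
\[
\Big(T_{\alpha\beta}-\tfrac12 Tg_{\alpha\beta}\Big)v^\alpha v^\beta
=\frac{1}{mc}\int\!\!\int f\,(mc^2+\mathcal I)\,\phi\,\Big[(p\cdot v)^2-\tfrac12 m^2c^2\,(v\cdot v)\Big]\,d\mathcal I\,d\boldsymbol P .
\]
The weight $f(mc^2+\mathcal I)\phi$ is non-negative, and the measure $d\boldsymbol P=d^3p/p_0$ is positive because $p^0>0$ for future-directed momenta. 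It therefore suffices to show the bracket is non-negative. The reverse Cauchy--Schwarz inequality for timelike vectors gives $(p\cdot v)^2\geq (p\cdot p)(v\cdot v)=m^2c^2\,v\cdot v$. This yields a bracket $\geq\tfrac12 m^2c^2\,v\cdot v> 0$ for timelike $v$, and $\geq 0$ for null $v$. The same computation gives the weak energy condition, since $(p\cdot v)^2\geq 0$.

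\textbf{FLRW consequences.} Minimal coupling means $T^{\alpha\beta}$ is the same tensorial integral in any local inertial frame, so the argument holds at every event. By homogeneity and isotropy, $q^\alpha$ and $t^{\langle\alpha\beta\rangle}$ vanish, and $T^{\alpha\beta}$ takes the form \eqref{eq:Tiso_nogo}. We then test three choices of $v$:
\begin{itemize}
\item[(i)] $v=u$ gives $e=T_{\alpha\beta}u^\alpha u^\beta/c^2\cdot c^2\geq 0$, directly from the integral.
\item[(ii)] A unit spacelike $n$ orthogonal to $u$ gives $p+\Pi=T_{\alpha\beta}n^\alpha n^\beta=\frac{1}{mc}\int\!\!\int f(mc^2+\mathcal I)\phi\,(p\cdot n)^2\geq0$.
\item[(iii)] $v=u$ in the strong energy condition gives $\tfrac12\big(e+3(p+\Pi)\big)\geq0$.
\end{itemize}
Strict positivity in the non-vacuum case comes from the same integrals. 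If $f\phi$ is not a.e.\ zero on the support, then $(p\cdot u)^2>0$ and $mc^2+\mathcal I>0$ force $e>0$, and hence $e+3(p+\Pi)\geq e>0$. None of this uses the form of the closure, only that $T^{\alpha\beta}$ is the kinetic second moment with $f\geq0$, which gives the closure-independence claim.

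\textbf{Main obstacle.} The delicate point is not the algebra but admissibility. The RET closures are linearized near equilibrium, so the approximate $f$ need not remain non-negative. The theorem should therefore be read as a constraint on states represented by a genuinely non-negative $f$. The inequalities $p+\Pi\geq0$ and $e+3(p+\Pi)>0$ then become conditions on $\bar\Pi$ (e.g.\ $\bar\Pi\geq-1/\gamma$) that such states must satisfy. I would state this explicitly rather than claim it for arbitrary $(\gamma,\bar\Pi)$.
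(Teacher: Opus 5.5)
Your proof is correct and follows the paper's argument essentially step for step. Both use the same integral representation of $\bigl(T_{\alpha\beta}-\tfrac12 T g_{\alpha\beta}\bigr)v^\alpha v^\beta$, the reverse Cauchy--Schwarz bound on the bracket, the non-negativity of the weight $f\,(mc^2+\mathcal I)\,\phi$, and projections of the isotropic form to obtain $e>0$ and $p+\Pi\geq0$. The differences are cosmetic: you contract with a single unit $n^\alpha\perp u^\alpha$ where the paper traces with $h_{\alpha\beta}$ (equivalent under isotropy), and you drop harmless factors of $c$ in (i) and (iii), where the correct relation is $e=T_{\alpha\beta}u^\alpha u^\beta/c^2$ with $v^\alpha=u^\alpha/c$ implicit. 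Your admissibility caveat matches the paper's own remark that the linearized closure need not keep $f\geq0$ far from equilibrium.
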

\begin{proof}
Fix a spacetime event $x$ and choose a local inertial frame at that point. Let
$v^\beta$ be any future-directed timelike vector. Using the second kinetic
moment in \eqref{eq:RET15_moments_def}, the covariant strong-energy-condition
combination can be written as
\begin{align}
&\left(
T_{\beta\theta}
-\frac{1}{2}T^\nu{}_\nu g_{\beta\theta}
\right)v^\beta v^\theta
\nonumber\\
&\qquad =
\frac{1}{mc}
\int_{\mathbb R^3}\!\!\int_0^{+\infty}
    f
    \left[
        (p_\beta v^\beta)^2
        -\frac{1}{2}(p_\nu p^\nu)(v_\theta v^\theta)
    \right]
    \left(mc^2+\mathcal I\right)
    \phi(\mathcal I)\,d\mathcal I\,d\boldsymbol P .
\label{eq:SEC_covariant_nogo_proof}
\end{align}
With the signature $(+,-,-,-)$, both $p^\alpha$ and $v^\alpha$ are
future-directed timelike vectors. Hence the reverse Cauchy--Schwarz inequality
gives
\[
    (p_\alpha v^\alpha)^2
    \geq
    (p_\alpha p^\alpha)(v_\beta v^\beta),
\]
and therefore
\[
    (p_\alpha v^\alpha)^2
    -\frac{1}{2}(p_\alpha p^\alpha)(v_\beta v^\beta)
    \geq
    \frac{1}{2}(p_\alpha p^\alpha)(v_\beta v^\beta)>0.
\]
Since
\[
    f\geq0,
    \qquad
    \phi(\mathcal I)\geq0,
    \qquad
    mc^2+\mathcal I>0,
\]
the integrand in \eqref{eq:SEC_covariant_nogo_proof} is non-negative.
Consequently, the strong energy condition holds pointwise. Moreover, for a
non-vacuum state, namely when
$f\,\phi(\mathcal I)$ is nonzero on a set of positive measure in phase space,
the integral is strictly positive for every future-directed timelike vector
$v^\alpha$.

We now specialize the same conclusion to the homogeneous and isotropic form used
in the FLRW application. From the decomposition
\[
    T^{\alpha\beta}
    =
    \frac{e}{c^2}u^\alpha u^\beta
    +
    (p+\Pi)h^{\alpha\beta}
\]
and from the second kinetic moment in \eqref{eq:RET15_moments_def}, the physical
energy density and the effective isotropic pressure are obtained by projection:
\begin{align}
    e
    &=
    \frac{1}{c^2}T^{\alpha\beta}u_\alpha u_\beta
    \nonumber\\
    &=
    \frac{1}{mc^3}
    \int_{\mathbb R^3}\!\!\int_0^{+\infty}
    f\,(p^\alpha u_\alpha)^2
    \left(mc^2+\mathcal I\right)
    \phi(\mathcal I)\,d\mathcal I\,d\boldsymbol P
    >0,
    \label{eq:e_positive_nogo}
    \\[6pt]
    p+\Pi
    &=
    \frac{1}{3}T^{\alpha\beta}h_{\alpha\beta}
    \nonumber\\
    &=
    \frac{1}{3mc}
    \int_{\mathbb R^3}\!\!\int_0^{+\infty}
    f\,p^\alpha p^\beta h_{\alpha\beta}
    \left(mc^2+\mathcal I\right)
    \phi(\mathcal I)\,d\mathcal I\,d\boldsymbol P
    \geq0 .
    \label{eq:pressure_positive_nogo}
\end{align}
Here $h_{\alpha\beta}$ is the positive spatial projector in the local rest space
orthogonal to $u^\alpha$. Therefore
\[
    e+p+\Pi>0,
    \qquad
    e+3(p+\Pi)>0 .
\]
Finally, the minimal-coupling principle preserves the local constitutive
structure of the fluid. Since the energy conditions are purely scalar algebraic  inequalities (constructed from tensor contractions), their validity in a local inertial frame extends to any coordinate system at the same spacetime point $x$. Consequently, the positivity conditions derived in Minkowski spacetime hold pointwise on the tangent space at every event of the curved spacetime. Because the strong energy condition is a local algebraic condition on the stress-energy tensor, the same conclusion applies to the general-covariant extension of the model obtained through minimal coupling.
\end{proof}

We can now present an immediate consequence of Theorem~\ref{th-1} concerning
the effect of polyatomic kinetic matter on cosmic dynamics. To this end we
introduce the deceleration parameter as \cite{Ellisetal:2012}
\begin{equation}
\label{eq:decel}
    q :=
    -\frac{\ddot a\,a}{\dot a^2}
    =
    -\frac{\ddot a}{aH^2}.
\end{equation}
\begin{corollary}[No accelerated expansion]
Consider a spatially flat FLRW universe sourced by a relativistic polyatomic gas
as described in Theorem~\ref{th-1}. In particular, this applies to RET$_6$ in
its admissible kinetic regime. Then the scale factor $a(t)$ necessarily
satisfies
\[
    \ddot a\leq0,
\]
while the deceleration parameter satisfies
\[
    q\geq\frac{1}{2}>0.
\]
In other words, accelerated expansion cannot arise from such polyatomic kinetic
matter alone.
\end{corollary}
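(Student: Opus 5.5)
The plan is to combine the Raychaudhuri equation \eqref{eq:Friedmann1b} with the sign information \eqref{eq:SEC_FLRW} from Theorem~\ref{th-1}, and then to normalize by the Friedmann constraint \eqref{eq:Friedmann2b} with $K=0$.

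First, since the FLRW source has the isotropic form \eqref{eq:Tiso_nogo}, Theorem~\ref{th-1} gives $e>0$, $p+\Pi\geq 0$ and $e+3(p+\Pi)>0$ for non-vacuum states. Inserting this into \eqref{eq:Friedmann1b} gives directly
\[
\frac{\ddot a}{a}=-\frac{4\pi G}{3c^2}\bigl[e+3(p+\Pi)\bigr]<0 ,
\]
and since $a>0$ this yields $\ddot a\leq 0$. The inequality is in fact strict away from vacuum; the non-strict form also covers the vacuum case, where $f\phi$ vanishes almost everywhere.

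Second, for the deceleration parameter I use \eqref{eq:decel} and the flat constraint $H^2=\frac{8\pi G}{3c^2}e$, which is legitimate since $e>0$ ensures $H\neq0$. Dividing the Raychaudhuri equation by $H^2$, the factor $\frac{8\pi G}{3c^2}$ cancels and leaves
\[
q=\frac{e+3(p+\Pi)}{2e}=\frac12+\frac{3}{2}\,\frac{p+\Pi}{e}.
\]
The bound $q\geq \frac12$ then follows from $p+\Pi\geq0$ and $e>0$, both supplied by \eqref{eq:pressure_positive_nogo} and \eqref{eq:e_positive_nogo}.

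The main point needing care is not the algebra but the applicability of Theorem~\ref{th-1} to RET$_6$. The RET$_6$ stress tensor \eqref{Tab} arises from the MEP closure linearized near equilibrium, and the approximate distribution need not stay non-negative. I would therefore state explicitly that the corollary applies in the admissible kinetic regime, meaning states where the closure is realized by some $f\geq0$. In that regime $\bar\Pi\geq -1/\gamma$, i.e. $p+\Pi\geq0$, which is the lower boundary curve of Fig.~\ref{dominio}. Outside this regime the statement is simply not claimed.
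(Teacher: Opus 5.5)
Your proposal is correct and follows the paper's proof essentially verbatim. You get $\ddot a\leq0$ from the Raychaudhuri equation \eqref{eq:Friedmann1b} together with the sign conditions \eqref{eq:SEC_FLRW}, and $q=\tfrac12\bigl[1+3(p+\Pi)/e\bigr]\geq\tfrac12$ by combining it with the flat Friedmann constraint and using $e>0$, $p+\Pi\geq0$. Your extra remarks (strict deceleration away from vacuum, $H\neq0$, and restricting RET$_6$ to states realizable by some $f\geq0$, hence $\bar\Pi\geq-1/\gamma$) are consistent refinements of the same argument rather than a different route.
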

\begin{proof}
The first part of the statement follows directly from the Raychaudhuri equation
\eqref{eq:Friedmann1b} and Theorem~\ref{th-1}. The second part follows by
combining Eq.~\eqref{eq:Friedmann1b} and Eq.~\eqref{eq:Friedmann2b} with
$K=0$, which yields
\begin{equation}
\label{eq:q_noLambda}
    q
    =
    \frac{1}{2}\left[
    1+\frac{3\bigl(p+\Pi\bigr)}{e}
    \right]
    =
    \frac{1}{2}\left[
    1+\frac{3\bigl(1/\gamma+\bar{\Pi}\bigr)}{\omega(\gamma)}
    \right].
\end{equation}
The conclusion follows from $e>0$ and $p+\Pi\geq0$, as established in
Theorem~\ref{th-1}.
\end{proof}

Hence, the non-equilibrium pressure may reduce the deceleration
parameter, but it cannot make the expansion accelerated as long as the matter
source remains kinetically realizable by a non-negative polyatomic distribution
function. Therefore, an accelerated expansion of the universe still requires physics beyond standard relativistic kinetic matter of massive particles with internal degrees of freedom, as one would naively expect. The present result makes this obstruction explicit for the polyatomic RET closure used in this paper.
\section{Including the cosmological constant: the $\Lambda$RET$_{6}$ model}
\label{sec:LRET}
The analysis carried out so far deliberately omitted the cosmological constant $\Lambda$ in order to isolate the purely kinetic-theory effects of the bulk viscous pressure of the RET$_6$ fluid. We now reintroduce $\Lambda$ in the Friedmann equations and comment on the effect of the RET$_6$ fluid combined with the dark energy fluid.

First, observe that for a flat ($K=0$) FLRW universe, reinstating the cosmological constant in the Friedmann equations yields
\begin{align}
    H^2 &= \frac{8\pi G}{3c^2}\,e + \frac{\Lambda c^2}{3},
    \label{eq:Friedmann1_Lambda} \\
    \frac{\ddot a}{a} &= -\frac{4\pi G}{3c^2}\bigl[e + 3(p+\Pi)\bigr]
    + \frac{\Lambda c^2}{3}.
    \label{eq:Friedmann2_Lambda}
\end{align}
If we now consider a positive cosmological constant, an accelerated expansion of the universe is achieved ($\ddot{a} > 0$) if the condition
\begin{equation}
    \frac{\Lambda c^2}{3} > \frac{4\pi G}{3c^2}\bigl[e + 3(p+\Pi)\bigr] \, ,
\end{equation}
holds. Hence, as already pointed out in Section \ref{sec:nogo}, the cosmological constant (or an alternative form of dark energy) remains \emph{necessary} for acceleration even in the RET$_6$ framework. The dynamical pressure $\Pi$ alone cannot replace $\Lambda$, but it does modify the expansion history.

Consider the dimensionless density parameters
\begin{equation}
    \Omega_{{\rm M},0} := \frac{8\pi G\,e_0}{3 c^2 H_0^2}, \qquad
    \Omega_{\Lambda,0} := \frac{\Lambda c^2}{3 H_0^2} \, ,
\end{equation}
with the flat-universe constraint
\begin{equation}\label{condz}
    \Omega_{{\rm M},0} + \Omega_{\Lambda,0} = 1 \, ,
\end{equation}
it is easy to verify that the evolution equations are still those of the system  
\eqref{eq:ode_a_Lambda}--\eqref{eq:ode_pi_Lambda2} with the dimensionless Hubble parameter modified as
\begin{equation}
    \mathcal{H}(a,\gamma) := \frac{H}{H_0}
    = \sqrt{\Omega_{{\rm M},0}
      \left(\frac{a_0}{a}\right)^3
      \frac{\omega(\gamma)}{\omega_0}
      + \Omega_{\Lambda,0}}.
    \label{eq:Hhat}
\end{equation}

We can also easily compute the modifications to the expansion parameter seen in Section \ref{sec:nogo}. Indeed, substituting \eqref{eq:Friedmann2_Lambda} and eliminating $\Lambda$ from \eqref{eq:Friedmann1_Lambda} we find
\begin{align}
\label{eq:decparlambda}
q=-\frac{\ddot{a}}{a H^2} &= -\frac{1}{  H^2}\left\{-\frac{4\pi G}{3 c^2}[e+3(p+\Pi)]+H^2-\frac{8\pi G}{3 c^2}e\right\}\\
&= \frac{4\pi G}{ H^2 c^2}(  e+ p+\Pi ) - 1\\
&= \frac{4\pi G \rho}{  H^2  } \left[\omega(\gamma)+\frac{1}{\gamma}+\bar{\Pi} \right] - 1 \\
& = \frac{3}{2\mathcal{H}^2}
\left[\Omega_{{\rm M},0}
\left(\frac{a_0}{a}\right)^3
\frac{\omega(\gamma)}{\omega_0}\right]
\left(1+\frac{1/\gamma+\bar\Pi}{\omega(\gamma)}\right)-1
\end{align}

Comparing this result [Eq.~\eqref{eq:decparlambda}] with the deceleration parameter of the $\Lambda$-plus-dust model, i.e.
$$
q_{\Lambda + {\rm dust}} = \frac{4\pi G \rho_{\rm m}}{H^2} - 1,
$$
assuming $\rho = \rho_{\rm m}$ (with $\rho_{\rm m}$, the mass density of the dust), at a fixed value of $H$ yields
\be 
\Delta q = (q - q_{\Lambda + {\rm dust}})\big|_{H \, {\rm fixed}} = \frac{4\pi G \rho}{H^2} \left[\omega(\gamma)+\frac{1}{\gamma}+\bar{\Pi} - 1 \right] \, .
\ee
Note that, at late time, the $\Lambda$ plus dust model approximates the $\Lambda$CDM model since the contribution of radiation is sub-dominant in this regime. A brief discussion of the extension of $\Lambda$RET$_6$ that includes also the contribution of radiation is deferred to the end of Sec.~\ref{sec:numerical}. However, a complete phenomenological characterization of this model is beyond the scope of this work and is left for future investigations.
\subsection{Stability of the $\Lambda$RET$_6$ Model}
\label{StabLRET6}
We now specialize to the diatomic equation of state and assume a constant
positive relaxation time $\tau>0$. We employ standard dynamical-system methods
to investigate the existence and stability of a de~Sitter attractor for this
model; see \cite{Coley:2003mj} and references therein.

Consider the dynamical system of equations \eqref{eq:ode_a_Lambda}--\eqref{eq:ode_pi_Lambda2} with the dimensionless Hubble parameter as in \eqref{eq:Hhat}. To properly analyze the late-time behavior and avoid a diverging scale factor and temperature, we introduce the inverse variables $x = 1/a$ and $\xi = 1/\gamma$. Then, with respect to the physical cosmic time $t$ and introducing the notation $\omega_{,\xi} = \partial \omega / \partial \xi$, the system takes the following form:
\begin{align}
\dot{x} &= - {H} (x,\xi) \, x \ , \\
\dot{\xi} &= -\frac{3 {H}(x,\xi)}{\omega_{,\xi}} \left(\bar{\Pi} +\xi\right) \ , \\
\dot{\bar{\Pi}} &=-\frac{3 {H} (x,\xi)}{\xi^2\,\omega_{,\xi}} \left(\bar{\Pi} +\xi\right)[L_1 (\xi) \bar{\Pi}+L_2 (\xi)]-\frac{\bar{\Pi}}{{\tau}}.
\label{dsAsystem}
\end{align}
Setting the right-hand sides of the system to zero, the relevant fixed point corresponding to the de Sitter attractor is located at the origin of the new phase space, namely $(x=0, \, \xi=0, \, \bar{\Pi}=0)$ as shown in Figure~\ref{fig:dSAttractor}.
\begin{figure}
    \centering
    \includegraphics[width=0.55\linewidth]{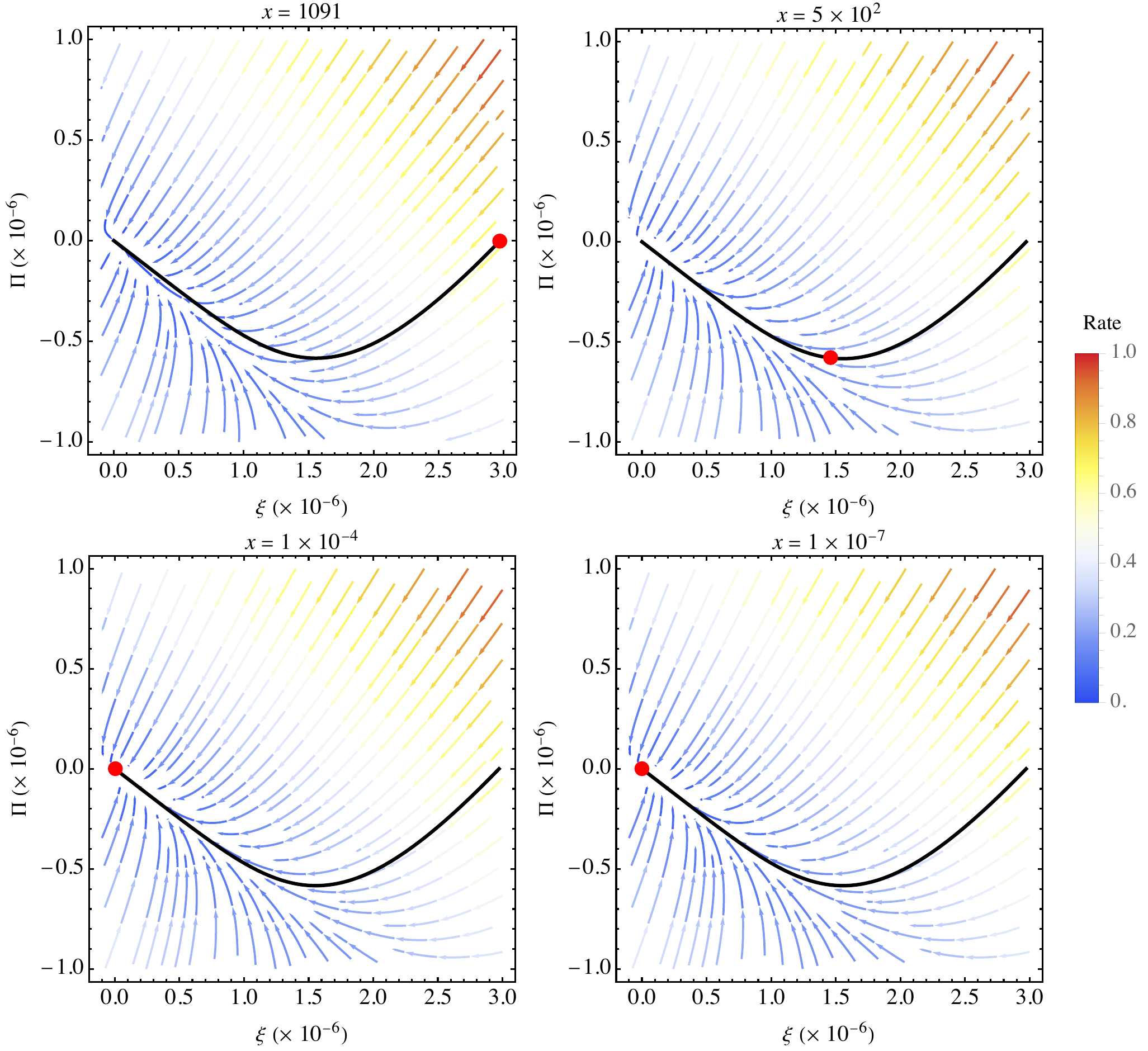}
    \caption{Phase portrait of the de Sitter attractor for the system \eqref{dsAsystem}, projected in the $\xi(x)$-$\bar \Pi(x)$ plane, evaluated at decreasing values of $x=1/a$ (i.e., increasing cosmic time). The system (red dot) dynamically evolves toward the origin $(0,0,0)$, starting from the recombination era ($a_0=1/1091, \xi_0 \approx 2.7\times 10^{-10}, \bar {\Pi}_0\approx -2.5 \times 10^{-10}$). The color gradient indicates the relative rate of the evolution in the phase space, taking normalized values in the interval $[1,0]$. Axes have been rescaled for the sake of clarity.}
    \label{fig:dSAttractor}
\end{figure}

Notice that, at this fixed point, the matter density vanishes (see Eq.~\eqref{eq:rho_a}) and the Hubble parameter reduces to that of de Sitter space, i.e.
\be
H(x,\xi)\big|_{x=0, \, \xi=0} = c\sqrt{\frac{\Lambda}{3}} = H_{\text{dS}}.
\ee

The stability of the system at late times is determined by the signs of the eigenvalues of the Jacobian matrix evaluated at this fixed point. The Jacobian $J$ at the de Sitter fixed point reads (see Appendix~\ref{app:jacobian} for details):
\be
J=
    \begin{pmatrix}
 -H_{\text{dS}} & 0 & 0 \\[1.2em]
 0 & -\dfrac{6H_{\text{dS}}}{5} & -\dfrac{6 H_{\text{dS}}}{5} \\[1.2em]
 0 & -\dfrac{4 H_{\text{dS}}}{5} & -\dfrac{4 H_{\text{dS}} \tau +5}{5 \tau } 
    \end{pmatrix} \ .
\ee
The corresponding eigenvalue problem for the eigenvalues $\mu$ yields:
\be
\mu_1 = -H_{\text{dS}} = -c\sqrt{\dfrac{\Lambda}{3}}, \quad \quad \mu_{2,3}=\frac{1}{2\tau}\left(-2 H_{\text{dS}} \tau -1 \pm \frac{\sqrt{20 H_{\text{dS}}^2 \tau ^2-4 H_{\text{dS}} \tau +5}}{\sqrt{5}} \right) \ .
\ee
Note that the argument of the square root is strictly positive for any physical value of $H_{\text{dS}}>0$ and $\tau >0$, guaranteeing that $\mu_{2,3}$ are well-defined and real. The decoupling of the spatial sector trivially yields $\mu_1 < 0$. Furthermore, it is manifest that $\mu_3$ (associated with the minus sign) is strictly negative. 

To ensure the complete asymptotic stability of the thermodynamic sector we must verify that $\mu_2 < 0$, which requires
\be
\frac{1}{2\tau}\left(-2 H_{\text{dS}} \tau -1 + \frac{\sqrt{20 H_{\text{dS}}^2 \tau ^2-4 H_{\text{dS}} \tau +5}}{\sqrt{5}} \right)<0 \, .
\ee
A straightforward calculation then yields
\be
24 H_{\text{dS}} \tau > 0.
\ee
Since $H_{\text{dS}}>0$ and $\tau>0$, this condition is identically satisfied. This formally guarantees the (local) stability of the de Sitter attractor at late times.

\section{Numerical Simulations}
\label{sec:numerical}

All numerical integrations presented in this section use the diatomic
equation of state $\omega=\omega_{\rm dia}$. The monatomic equation of state
is quoted in Sec.~\ref{sec:RET} only for comparison, and no monatomic
numerical curve is included here. In the simulations employing
$m=m_{\rm proton}$, the matter component should therefore be understood as an
effective diatomic test gas rather than as a literal microscopic model of the
post-recombination cosmic medium.

We distinguish two classes of numerical experiments. First, we investigate the evolution of $\Lambda$RET$_6$ with $(\gamma, \bar\Pi)$ chosen in various locations of the hyperbolicity region (see Figure~\ref{dominio}), irrespective of their physical relevance. This is done to (over)emphasize the effects of RET on the cosmic evolution. Secondly, we show that choosing more representative cosmological values of the parameters leads to a rapid convergence of the model to the cosmic dynamics of $\Lambda$CDM. This part of the discussion is performed as a proof-of-concept and is not meant as a complete phenomenological investigation, which is instead deferred to future works. 

\smallskip

Figures~\ref{fig:c1}, \ref{fig:c2} and \ref{fig:c3} show a set of numerical evolutions designed to explore the qualitative role of the $\Lambda$RET$_6$ dynamical pressure in the cosmological dynamics. In these cases the initial data are not chosen with the aim of reproducing a realistic cosmological history. Rather, they are selected inside the hyperbolicity domain discussed in Figure~\ref{dominio}, in order to make the non-equilibrium effects visible on the scale of the plots. This is useful because, for representative post-recombination values of the thermodynamic parameters, the dynamical pressure relaxes very rapidly and the resulting deviations from the corresponding $\Lambda$CDM evolution become extremely small.

Figures~\ref{fig:c1} and Figure~\ref{fig:c2} show the evolution of the scale factor $a$, of the dimensionless inverse temperature parameter $\gamma$, and of the dimensionless dynamical pressure $\bar\Pi$, for initially negative values of $\bar\Pi$, close to the bottom boundary of the hyperbolicity domain. In both cases the dynamical pressure relaxes monotonically towards equilibrium ($\bar\Pi = 0$), with a relaxation rate controlled by the dimensionless relaxation time $\bar\tau$. As expected, smaller values of $\bar\tau$ lead to a faster decay of the non-equilibrium phenomena, while larger values of $\bar\tau$ allow the dynamical pressure to persist longer. The corresponding evolution of the scale factor is much less sensitive to $\bar\tau$, suggesting that the non-equilibrium phenomena impact the expansion history mainly by modifying the evolution of $\gamma$ (it can be argued that the time evolution of the Hubble parameter $H$ is affected by $\bar\Pi$ only indirectly).
\begin{figure} 
	\centering%
	\includegraphics[width=0.45\linewidth]{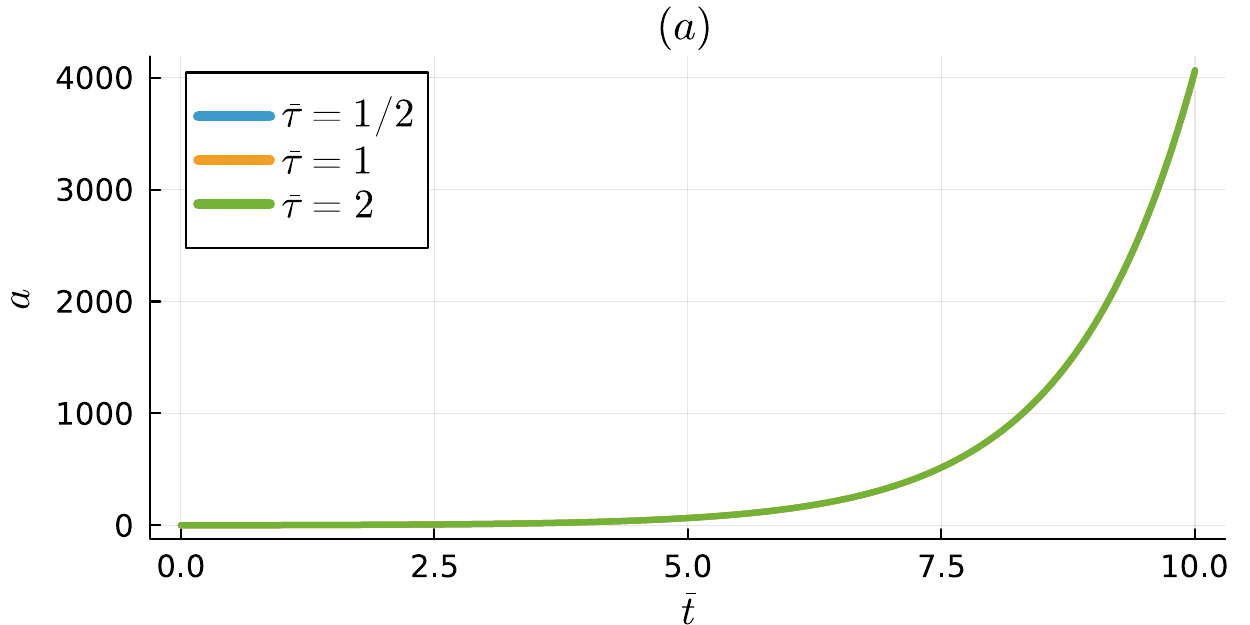}
	\includegraphics[width=0.45\linewidth]{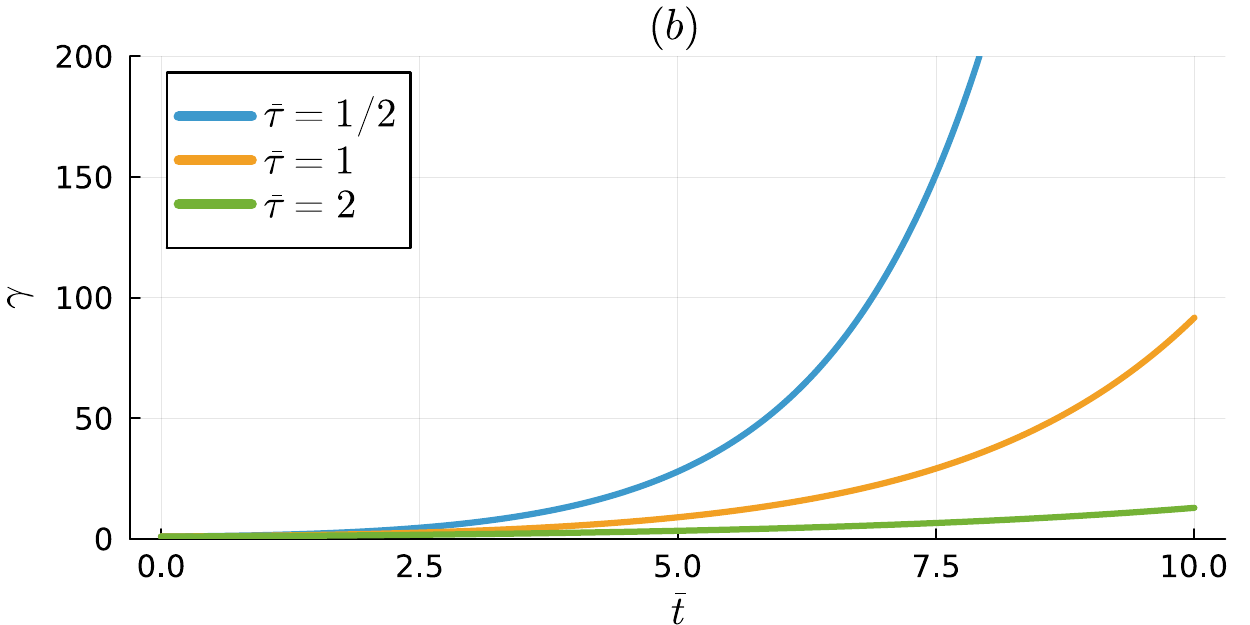}
	\includegraphics[width=0.45\linewidth]{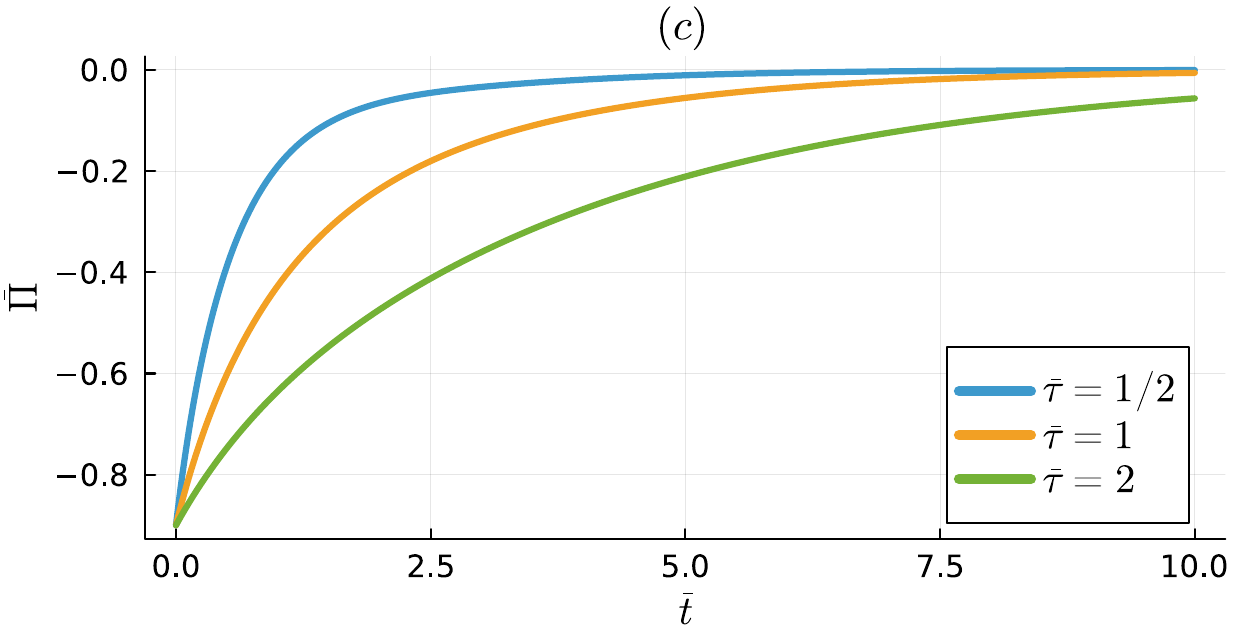}
	\caption{Evolution of $a$, $\gamma$, $\bar\Pi$ with initial conditions $a_0 = 1$, $\gamma_0 = 1$, $\bar\Pi_0 = -0.9$, and three different values of $\bar\tau$ ($\Omega_{M,0} \equiv \Omega_{M,*} = 0.32$, $\Omega_{R,0} \equiv 0$, $H_0 = 67.4$ km s$^{-1}$ Mpc$^{-1}$, $m = m_{proton}$). \label{fig:c1}}
\end{figure}
\begin{figure} 
	\centering%
	\includegraphics[width=0.45\linewidth]{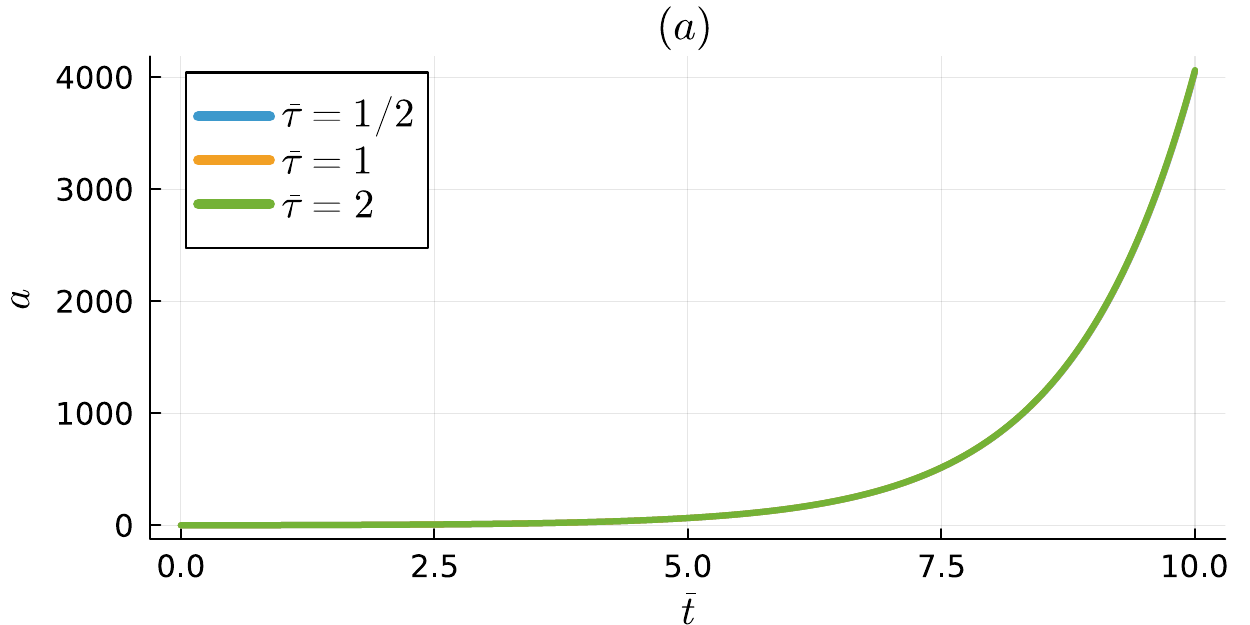}
	\includegraphics[width=0.45\linewidth]{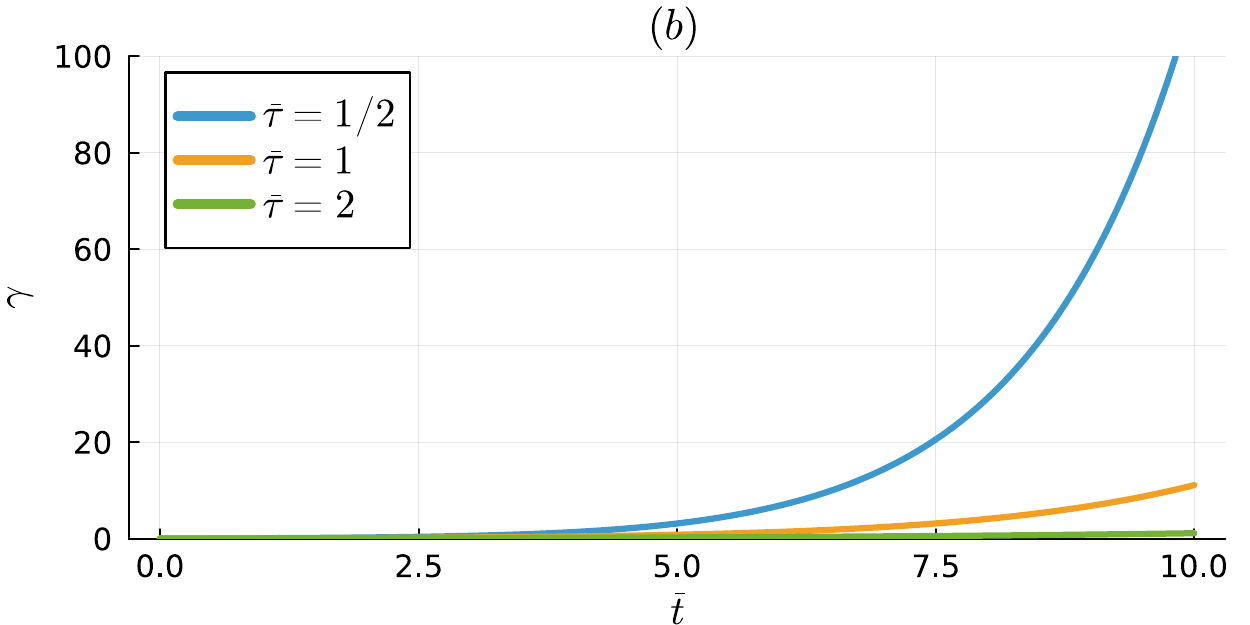}
	\includegraphics[width=0.45\linewidth]{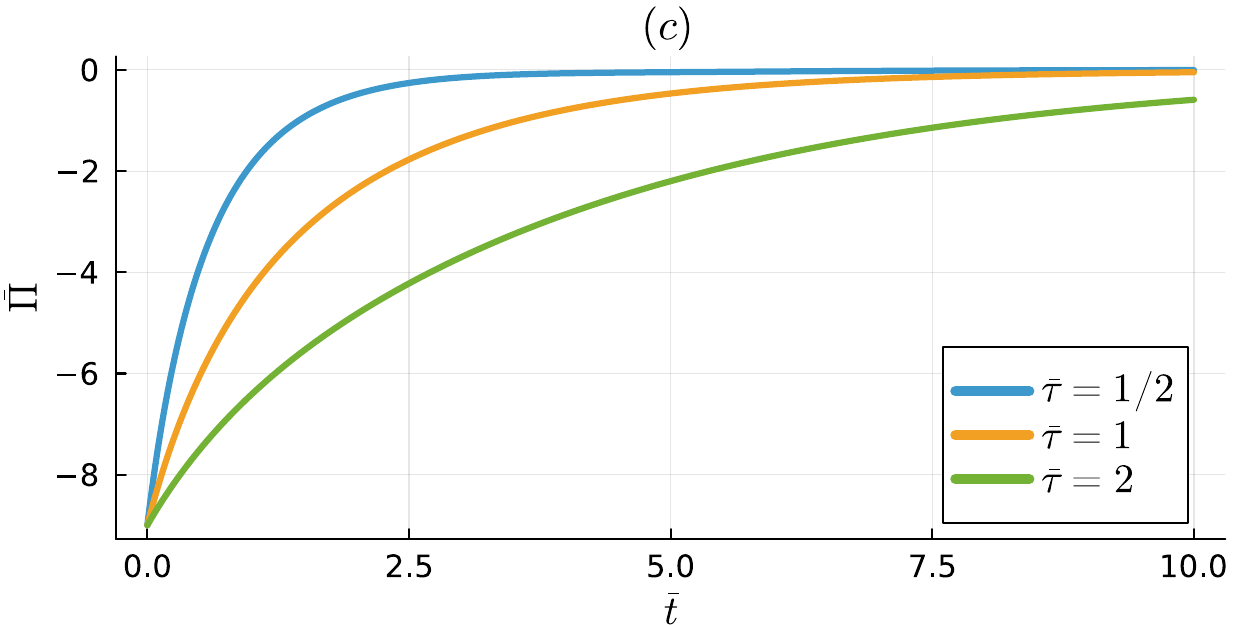}
	\caption{Evolution of $a$, $\gamma$, $\bar\Pi$ with initial conditions $a_0 = 1$, $\gamma_0 = 10^{-1}$, $\bar\Pi_0 = -9$, and three different values of $\bar\tau$ ($\Omega_{M,0} \equiv \Omega_{M,*} = 0.32$, $\Omega_{R,0} \equiv 0$, $m = m_{proton}$). \label{fig:c2}}
\end{figure}

The comparison between Figs.~\ref{fig:c1}(b) and Figure~\ref{fig:c2}(b) also illustrates the dependence on the initial temperature: in Figure~\ref{fig:c1}(b) the initial value is $\gamma_0 = 1$, while in Figure~\ref{fig:c2}(b) one has $\gamma_0 = 10^{-1}$, corresponding to a more relativistic initial state. The qualitative behavior is the same in the two cases: $\bar\Pi$ remains negative and relaxes to zero, while $\gamma$ increases (i.e. the gas cools) as the universe expands. This behavior is shown in Figs.~\ref{fig:c1}(c) and Figure~\ref{fig:c2}(c). The magnitude of the initial non-equilibrium pressure set in the simulation of Figure~\ref{fig:c2} ($\bar\Pi_0 = -9$) is much larger than the one of Figure~\ref{fig:c1} ($\bar\Pi_0 = -0.9$), but in both cases the initial state lies close to the lower boundary of the admissible region depicted in Figure~\ref{dominio}. The evolution of the thermodynamic variables shows a more pronounced transient, even though the scale factor evolution remains extremely close to the corresponding curves of Figure~\ref{fig:c1}, and is only slightly affected by the magnitude of the relaxation time $\bar\tau$.

Figure~\ref{fig:c3} displays an example with initially positive dynamical pressure ($\bar\Pi_0 = 10^{-1}$). In this case the relaxation is not monotone: the dynamical pressure first becomes negative before eventually approaching the equilibrium value. This overshoot, as expected, is more pronounced for larger $\bar\tau$, i.e. for slower relaxation of the non-equilibrium phenomena. 
\begin{figure} 
	\centering%
	\includegraphics[width=0.45\linewidth]{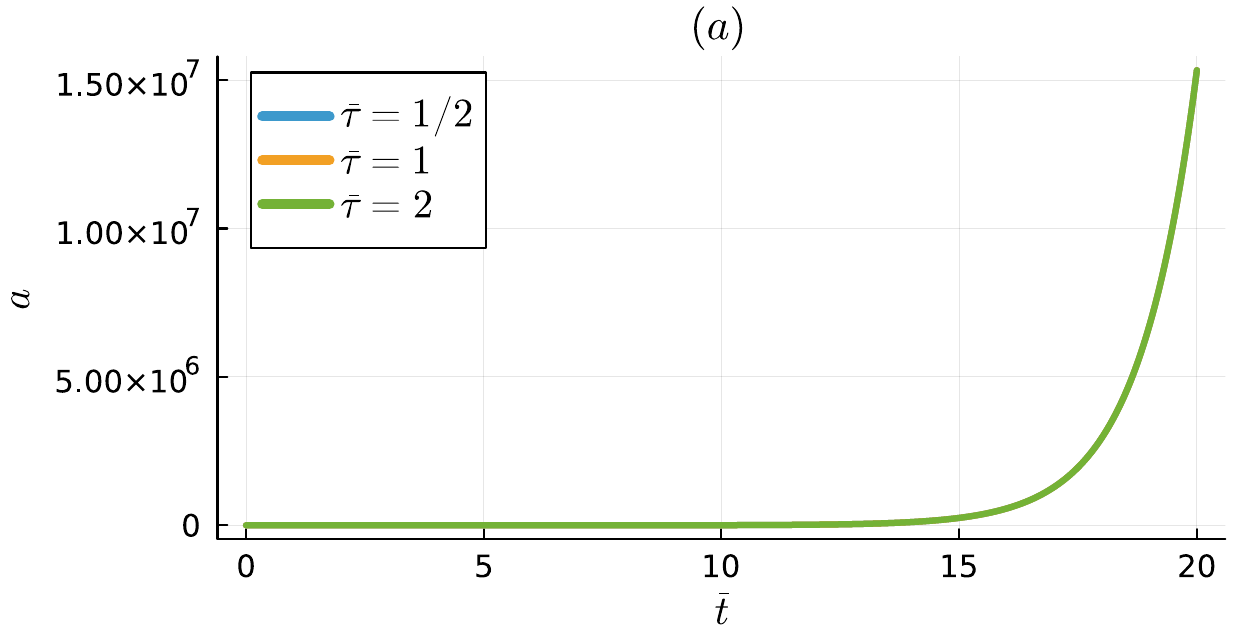}
	\includegraphics[width=0.45\linewidth]{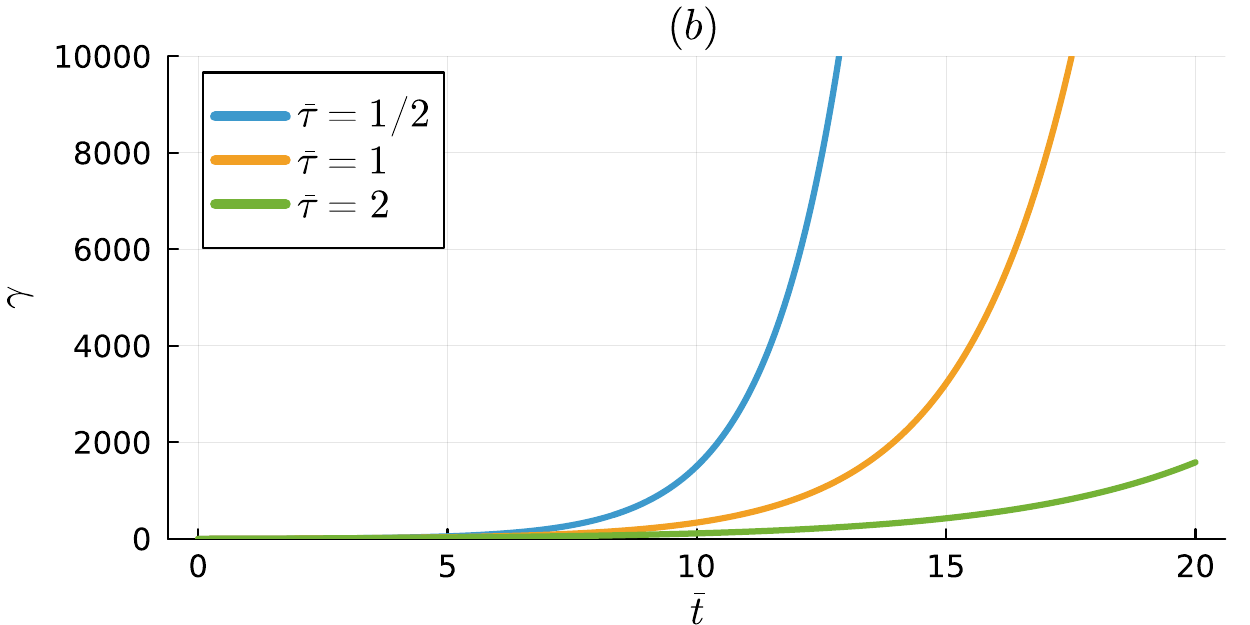}
	\includegraphics[width=0.45\linewidth]{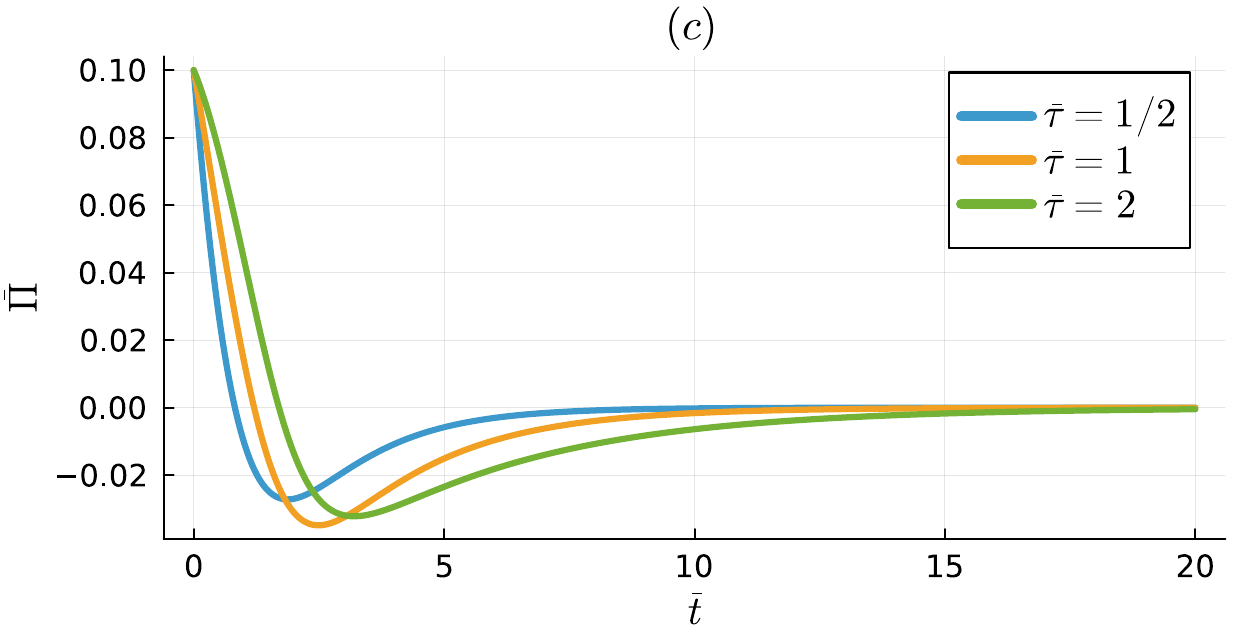}
	\caption{Evolution of $a$, $\gamma$, $\bar\Pi$ with initial conditions $a_0 = 1$, $\gamma_0 = 1$, $\bar\Pi_0 = 0.1$, and three different values of $\bar\tau$ ($\Omega_{M,0} \equiv \Omega_{M,*} = 0.32$, $\Omega_{R,0} \equiv 0$, $m = m_{proton}$). \label{fig:c3}}
\end{figure}

\smallskip 

Figures \ref{fig:c0_a}, \ref{fig:c0_T+Pi}, and \ref{fig:c0_N} show results of the numerical solution of the system with a more physically motivated set of initial data, chosen at the recombination time. The initial scale factor is taken to be
\begin{equation}
	a_0 = \frac{1}{1+z_0}, \qquad z_0 = 1090,	
\end{equation}
with $T_0 =2.7255/a_0 \simeq 2973.5\,{\rm K}$, and with an initially small negative dynamical pressure:
\begin{equation}
	\gamma_0 = \frac{m c^2}{k_B T_0} \simeq 3.7 \times 10^9, \qquad
    \bar{\Pi}_0 = -0.9 / \gamma_0 \approx -2.5 \times 10^{-10}.
\end{equation}
The matter, radiation and cosmological constant density parameters are
normalized at this initial time from the present-day reference values
$\Omega_{M,*}=0.32$ and $\Omega_{R,*}=9.2\times10^{-5}$, giving
$\Omega_{M,0}\simeq0.76$, $\Omega_{R,0}\simeq0.24$, and
$\Omega_{\Lambda,0}\simeq1.2\times10^{-9}$. Here and in the following, the
subscript $0$ refers to the initial recombination epoch, whereas the asterisk
refers to present-day values.

For this comparison we add a separately conserved equilibrium radiation
component, with $p_R=e_R/3$, which is not assigned a dynamical pressure and
couples to the RET$_6$ gas only through the common FLRW expansion rate. The
evolution equations \eqref{eq:ode_a_Lambda}--\eqref{eq:ode_pi_Lambda2}
therefore remain unchanged, with the Hubble function replaced by
\begin{equation}
\mathcal H_{\rm rad}(a,\gamma)
=
\left[
\Omega_{{\rm M},0}
\left(\frac{a_0}{a}\right)^3
\frac{\omega(\gamma)}{\omega_0}
+
\Omega_{{\rm R},0}
\left(\frac{a_0}{a}\right)^4
+
\Omega_{\Lambda,0}
\right]^{1/2}.
\label{eq:Hhat_radiation}
\end{equation}
The initial flatness condition is
$\Omega_{{\rm M},0}+\Omega_{{\rm R},0}+\Omega_{\Lambda,0}=1$.
The density parameters displayed below are obtained by dividing the three
terms in the square brackets in \eqref{eq:Hhat_radiation} by
$\mathcal H_{\rm rad}^2$.

The result is shown in Figure~\ref{fig:c0_a}: the scale factor obtained from $\Lambda$RET$_6$ is practically indistinguishable from the one obtained in the corresponding $\Lambda$CDM model, obtained as the solution of the Friedmann equation:
\begin{equation}
	\dot{a}=aH_0\sqrt{
\Omega_{R,0}\left(\frac{a_0}{a}\right)^4
+\Omega_{M,0}\left(\frac{a_0}{a}\right)^3
+\Omega_{\Lambda,0}},
\end{equation}
where $H_0$ is the Hubble parameter at the initial recombination epoch.
It turns out that the $\Lambda$RET$_6$ correction is rapidly damped and the expansion history matches closely the standard $\Lambda$CDM one (this consideration is largely independent from the chosen value for the relaxation time $\bar\tau$).

The rapid decay of the non-equilibrium phenomena is shown in Figure~\ref{fig:c0_T+Pi}, where the temperature and the dimensionless dynamical pressure are plotted for the same initial data as in Figure~\ref{fig:c0_a}. The temperature decreases rapidly as the universe expands; at the same time, $\bar\Pi$ approaches zero (i.e. the equilibrium value) on a very short timescale compared with the Hubble time. This confirms that, in the regime considered here, the dynamical pressure acts as a quickly damped transient correction that adds up to the dark energy source. In particular, the numerical solution remains close to local equilibrium, which is the regime in which the $\Lambda$RET$_6$ closure and the hyperbolicity analysis are physically meaningful. Note, however, that after decoupling the temperature entering the model is an effective thermodynamic variable and should not be interpreted as a directly measurable temperature.

Finally, Figure~\ref{fig:c0_N} shows the evolution of the density parameters and of the deceleration parameter $q$ as functions of the variable $N=\ln a$. The radiation fraction decreases first, followed by a matter-dominated time interval, and the cosmological constant eventually becomes dominant at late times. The deceleration parameter behaves accordingly: it is positive during the radiation- and matter-dominated eras, and becomes negative once the cosmological constant
dominates. The transition to accelerated expansion is therefore largely driven by $\Lambda$, with the contribution of the non-equilibrium dynamical pressure $\bar\Pi$ of the $\Lambda$RET$_6$ model, which is in agreement with the no-go theorem discussed in Sec.~\ref{sec:nogo}. The role of the dynamical pressure of the $\Lambda$RET$_6$ model is to provide a thermodynamically consistent non-equilibrium correction to the evolution.
\begin{figure} 
	\centering%
	\includegraphics[width=0.6\linewidth]{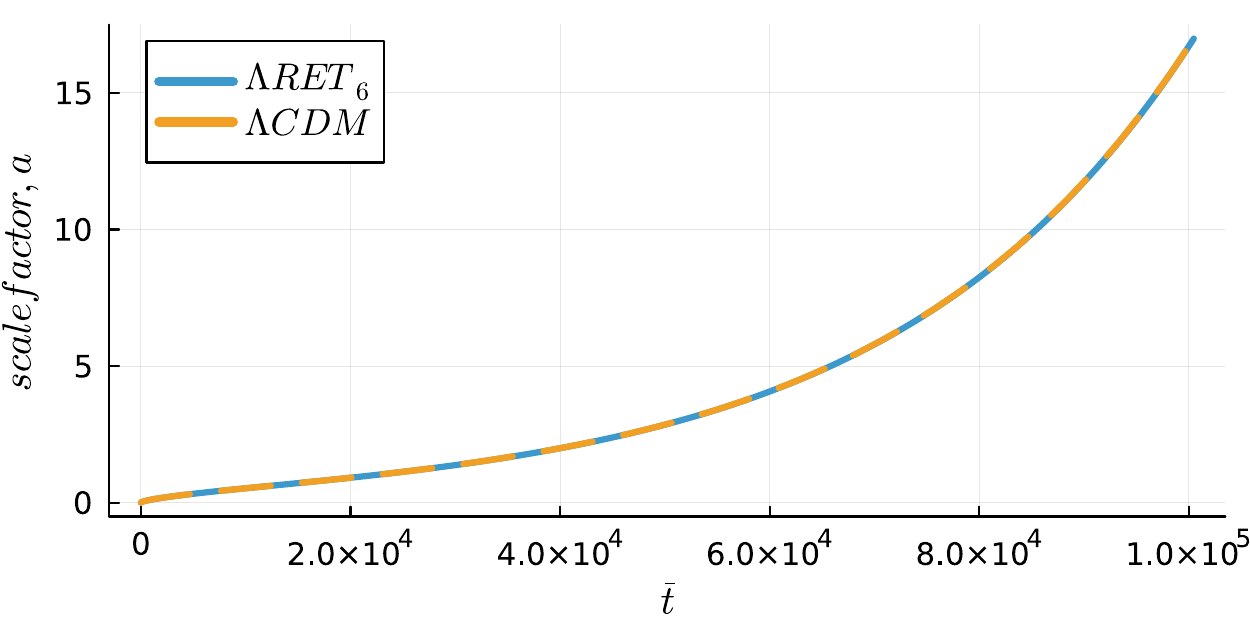}
	\caption{Comparison of the evolution of the scale factor $a$ obtained with the $\Lambda$RET$_6$ and the $\Lambda$CDM models. Initial data: $a_0 = 1/\left(1+z_0\right)$ with $z_0 = 1090$ (recombination era), $T_0 = 2.7255/a_0 \approx 2973.5$ K, $\gamma_0 = mc^2/k_BT_0 \approx 3.7 \times 10^9$, $\bar\Pi_0 = -0.9 / \gamma_0 \approx -2.5 \times 10^{-10}$ ($\bar\tau = 1$; $\Omega_{M,0} = \Omega_{M,*}\,a_0^{-3} / \left(\Omega_{M,*}\,a_0^{-3} + \Omega_{R,*}\,a_0^{-4} + \Omega_{\Lambda,*}\right) \approx 0.76$; $\Omega_{R,0} = \Omega_{R,*}\,a_0^{-4} / \left(\Omega_{M,*}\,a_0^{-3} + \Omega_{R,*}\,a_0^{-4} + \Omega_{\Lambda,*}\right) \approx 0.24$, with $\Omega_{M,*} = 0.32$, $\Omega_{R,*} = 9.2 \times 10^{-5}$; $\Omega_{\Lambda,0} = 1 - \Omega_{M,0} - \Omega_{R,0} \approx 1.2 \times 10^{-9}$; $H_0 = 1.57\times 10^6$ km s$^{-1}$ Mpc$^{-1}$; $m = m_{proton}$). \label{fig:c0_a}}
\end{figure}
\begin{figure} 
	\centering%
	\includegraphics[width=0.45\linewidth]{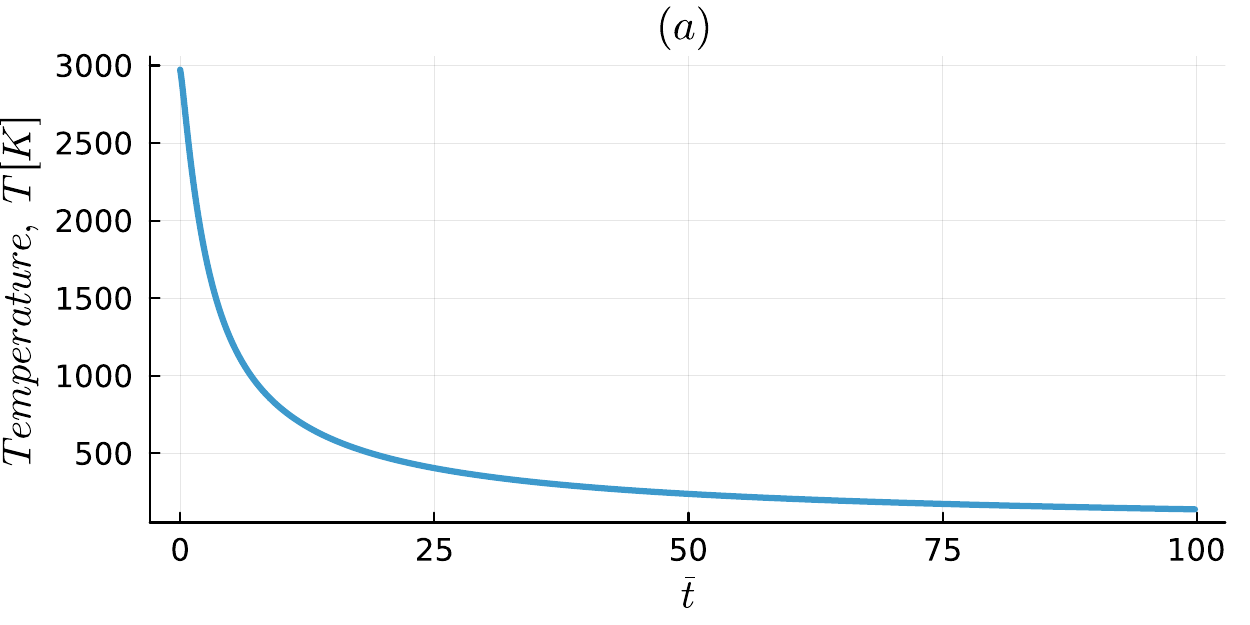}
	\includegraphics[width=0.45\linewidth]{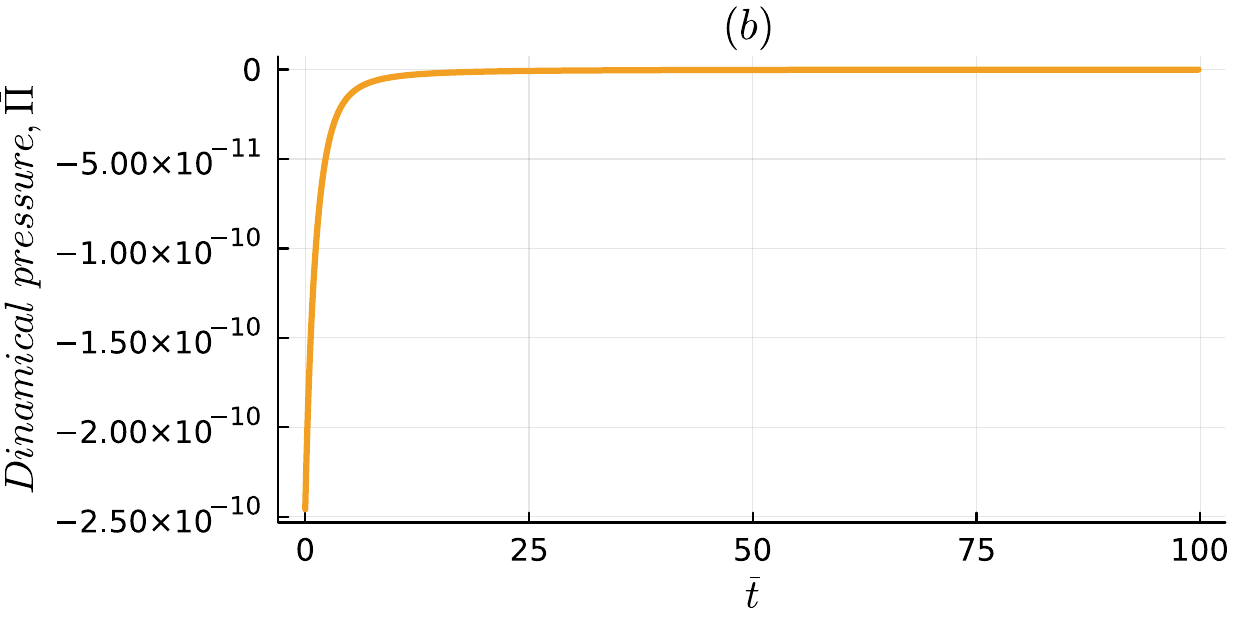}
	\caption{Evolution of the temperature $T$ and the dimensionless dynamical pressure $\bar\Pi$ (same parameters and initial data as in Figure~\ref{fig:c0_a}). \label{fig:c0_T+Pi}}
\end{figure}
\begin{figure} 
	\centering%
	\includegraphics[width=0.6\linewidth]{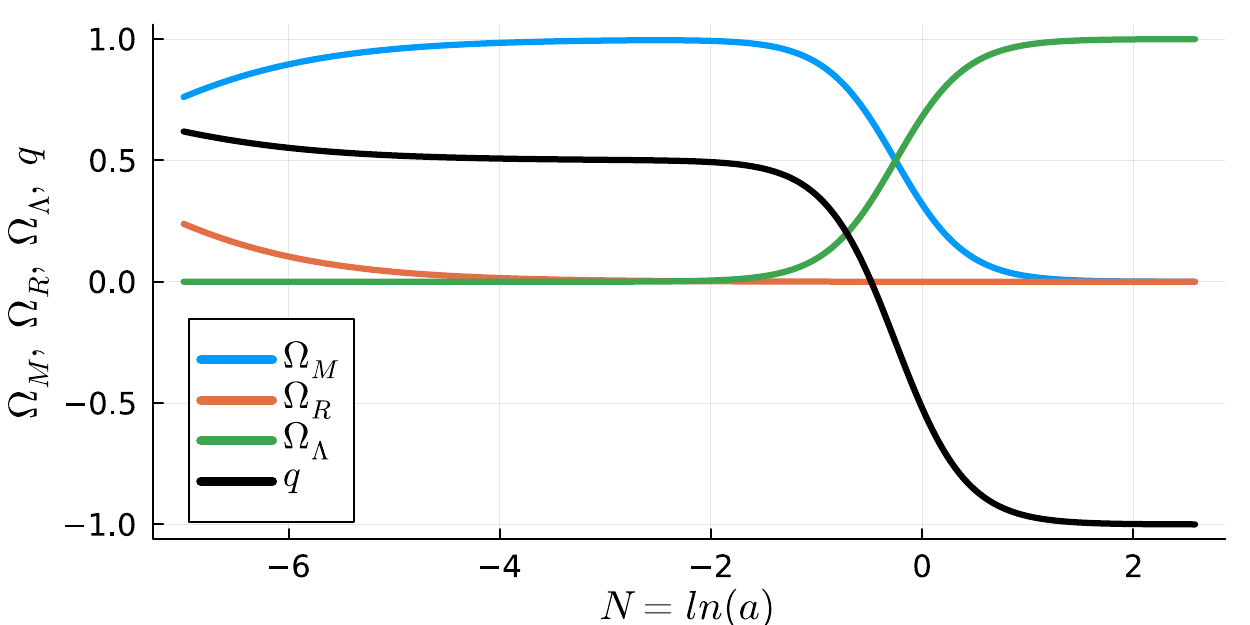}
	\caption{Evolution of the density parameters $\Omega_M = \Omega_{{\rm M},0} \left(\frac{a_0}{a}\right)^3 \left(\frac{\omega(\gamma)}{\omega_0}\right) \mathcal H_{\rm rad}^{-2}$, $\Omega_R = \Omega_{{\rm R},0} \left(\frac{a_0}{a}\right)^4 \mathcal H_{\rm rad}^{-2}$, $\Omega_\Lambda = \Omega_{\Lambda,0} \mathcal H_{\rm rad}^{-2}$ and of the deceleration parameter $q$ obtained with the $\Lambda$RET$_6$ model. Initial data: $a_0 = 1/\left(1+z_0\right)$ with $z_0 = 1090$ (recombination era), $T_0 = 2.7255/a_0 \approx 2973.5$ K, $\gamma_0 = mc^2/k_BT_0 \approx 3.7 \times 10^9$, $\bar\Pi_0 = -0.9 / \gamma_0 \approx -2.5 \times 10^{-10}$ ($\bar\tau = 1$; $\Omega_{M,0} = \Omega_{M,*}\,a_0^{-3} / \left(\Omega_{M,*}\,a_0^{-3} + \Omega_{R,*}\,a_0^{-4} + \Omega_{\Lambda,*}\right) \approx 0.76$; $\Omega_{R,0} = \Omega_{R,*}\,a_0^{-4} / \left(\Omega_{M,*}\,a_0^{-3} + \Omega_{R,*}\,a_0^{-4} + \Omega_{\Lambda,*}\right) \approx 0.24$, with $\Omega_{M,*} = 0.32$, $\Omega_{R,*} = 9.2 \times 10^{-5}$; $\Omega_{\Lambda,0} = 1 - \Omega_{M,0} - \Omega_{R,0} \approx 1.2 \times 10^{-9}$; $m = m_{proton}$). \label{fig:c0_N}}
\end{figure}
Taken together, these simulations show two complementary aspects of the model. When the initial data are chosen near the boundary of the admissible hyperbolicity region, $\Lambda$RET$_6$ effects can produce small, albeit appreciable, transient modifications in the evolution of the thermodynamic quantities and in the relaxation of the dynamical pressure. When representative post-recombination initial data are used, the same non-equilibrium mode decays rapidly and the cosmic expansion converges to the $\Lambda$CDM one, as expected for the parameter range considered here. 

For the parameter values considered here, the background expansion history
of the $\Lambda$RET$_6$ model becomes practically indistinguishable from that
of the corresponding $\Lambda$CDM model, while retaining a thermodynamically
consistent kinetic-theory mechanism for small non-equilibrium corrections. A
complete assessment of its phenomenological viability, including cosmological
perturbations and observational constraints, is left for future work.

%
%
%
\section{Conclusions}
\label{sec:conclusions}
In this work we have presented a general-relativistic model of non-equilibrium gases on curved spacetimes within the framework of Rational Extended Thermodynamics. 

Notably, in this approach to dissipative relativistic fluids the constitutive
structure is derived from the underlying relativistic kinetic theory of
  polyatomic gases through the Maximum Entropy Principle and a
generalized Boltzmann--Chernikov kinetic equation. The explicit model is fixed
once  
the relaxation time, have been specified.

We focused specifically on RET$_6$, since the only non-equilibrium variable is the
dynamical pressure $\Pi$, promoting its Minkowski space formulation to a general curved spacetime by means of the {\em minimal coupling principle}.

The first notable result we found is a {\em kinetic-theory no-go theorem}, which states that for any admissible one-particle distribution, the RET$_6$ stress-energy tensor automatically satisfies the strong energy condition. This result can be traced back to the kinetic closure of the considered theory.

Then, we specialized our investigation to the study of the RET$_6$ gas as a source for a homogeneous and isotropic FLRW spacetime as a practical implementation of the general theory. In particular, we have shown how the non-equilibrium dynamical pressure modifies the expansion dynamics with respect to the perfect-fluid (Euler) case. Consistently with the kinetic-theory no-go theorem, cosmic acceleration cannot be generated by the RET$_6$ gas alone. We then reintroduced the cosmological constant term, formulating what we dubbed as $\Lambda$RET$_6$ model, i.e. the RET$_6$ gas plus the cosmological constant. For the diatomic equation of state and a constant positive relaxation time, we established the existence and local stability of a de~Sitter attractor at late times.

Finally, we performed a numerical investigation of $\Lambda$RET$_6$, both with and without the inclusion of radiation, showing that the expansion history rapidly converges to that of $\Lambda$CDM, while exhibiting small deviations gauged by the relaxation time and the initial non-equilibrium state of the gas. Thus, the role of RET$_6$ is not to replace dark energy, but to provide a thermodynamically consistent kinetic-theory mechanism for transient non-equilibrium corrections to the standard cosmological expansion.

\begin{acknowledgments}
    \noindent A.G. is supported by the Italian Ministry of Universities and Research (MUR) through the grant ``BACHQ: Black Holes and The Quantum'' (grant no. J33C24003220006). A.G. and A.M. are partially supported by the INFN grant FLAG. This work has been carried out in the framework of activities of the National Group of Mathematical Physics (GNFM, INdAM).
\end{acknowledgments}

\section*{Data availability}
\noindent No external datasets were used in this work. The numerical data underlying the figures are generated from the equations and parameter values reported in the
manuscript and are available from the corresponding author upon reasonable
request.

\section*{Conflict of interest}
\noindent The authors declare no competing interests.

\section*{Author contributions}
\noindent All authors contributed equally to the conception, development, and writing of the manuscript.

%
%
%
\appendix
\section{Characteristic Velocities}
\label{app:char_vel}
We begin with the \emph{complete} RET$_6$ system in flat spacetime,
including the production term in the balance law for the dynamical pressure:
\begin{subequations}
\label{eq:RET6_system}
\begin{align}
    &\partial_{\alpha}V^{\alpha}=0,
    \label{eq:sys_mass}
    \\
    &\partial_{\alpha}T^{\alpha\beta}=0,
    \label{eq:sys_momentum}
    \\
    &u^{\alpha}\partial_{\alpha}\Pi
    +c^2\rho\bigl(L_2+L_1\bar{\Pi}\bigr)
       u^{\alpha}\partial_{\alpha}\gamma
    +\Pi\partial_{\alpha}u^{\alpha}
    =-\frac{\Pi}{\tau},
    \label{eq:sys_Pi}
    \\
    &p=\frac{\rho c^2}{\gamma},
    \qquad
    e=\rho c^2\omega(\gamma).
    \label{eq:sys_eos}
\end{align}
\end{subequations}

The source term in Eq.~\eqref{eq:sys_Pi} is essential for the relaxation
dynamics. Since it contains no derivatives of the field variables, however,
it does not contribute to the characteristic polynomial. Therefore,
\emph{only for the calculation of the characteristic velocities}, we set
the right-hand side of Eq.~\eqref{eq:sys_Pi} equal to zero and apply the
standard characteristic substitution \cite{JMP}
\begin{equation}
\label{chain}
    \partial_{\alpha}
    \longrightarrow
    \left(
        \nu_{\alpha}
        -\frac{\lambda}{c}u_{\alpha}
    \right)\delta .
\end{equation}
Here $\lambda$ is the dimensionless characteristic velocity, measured in
units of $c$ in the local rest frame, while $\delta$ denotes the amplitude
of the discontinuity in the first derivatives. The spacelike unit covector
$\nu_{\alpha}$ satisfies
\begin{equation}
    \nu_{\alpha}\nu^{\alpha}=-1,
    \qquad
    \nu_{\alpha}u^{\alpha}=0.
\end{equation}
Applying Eq.~\eqref{chain} to particle conservation,
$\partial_{\alpha}(\rho u^{\alpha})=0$, gives
\begin{equation}
    \left(\nu_{\alpha}-\frac{\lambda}{c}u_{\alpha}\right)
\left(u^{\alpha}\delta\rho+\rho\,\delta u^{\alpha}\right)=0.
\end{equation}
Since $\nu_{\alpha}u^{\alpha}=0$ and
$u_{\alpha}\delta u^{\alpha}=0$, the latter relation becomes
\begin{equation}
\label{eq:vdu}
    \nu_{\alpha}\delta u^{\alpha}
    =\frac{\lambda c}{\rho}\,\delta\rho.
\end{equation}
For the energy-momentum tensor we have
\begin{equation}
\begin{split}
    \delta T^{\alpha\beta}
    ={}&\frac{\delta(e+p+\Pi)}{c^2}u^{\alpha}u^{\beta}
    +\frac{e+p+\Pi}{c^2}
       \left(u^{\alpha}\delta u^{\beta}
       +\delta u^{\alpha}u^{\beta}\right)
    \\
    &-\delta(p+\Pi)g^{\alpha\beta}.
\end{split}
\end{equation}
The characteristic form of energy--momentum conservation is therefore
\begin{equation}
\label{eq:char_T}
    \left(\nu_{\alpha}-\frac{\lambda}{c}u_{\alpha}\right)
    \delta T^{\alpha\beta}=0.
\end{equation}
Projection of Eq.~\eqref{eq:char_T} along $u_{\beta}$ gives the energy
relation
\begin{equation}
\label{eq:energysector}
    -\lambda c\,\delta e
    +(e+p+\Pi)\nu_{\alpha}\delta u^{\alpha}=0,
\end{equation}
whereas projection onto the local rest space gives the momentum relation
\begin{equation}
\label{eq:momentumsector}
    -\frac{\lambda}{c}(e+p+\Pi)
       \delta u^{\beta}h_{\beta}{}^{\gamma}
    +\delta(p+\Pi)\nu^{\gamma}=0.
\end{equation}
Using Eq.~\eqref{eq:vdu} in Eq.~\eqref{eq:energysector}, we obtain
\begin{equation}
\label{eq:de}
    \delta e=\frac{e+p+\Pi}{\rho}\,\delta\rho.
\end{equation}
On the other hand, the equation of state
$e=\rho c^2\omega(\gamma)$ yields
\begin{equation}
\label{eq:de_thermo}
    \delta e
    =c^2\omega\,\delta\rho
    +\rho c^2\omega'(\gamma)\,\delta\gamma.
\end{equation}
Equating Eqs.~\eqref{eq:de} and \eqref{eq:de_thermo}, and using
$e=\rho c^2\omega$, gives
\begin{equation}
\label{eq:dgamma}
    \delta\gamma
    =\frac{p+\Pi}{\rho^2c^2\omega'}\,\delta\rho.
\end{equation}
Contracting Eq.~\eqref{eq:momentumsector} with $\nu_{\gamma}$ and using the
positive spatial projector
$h^{\alpha\beta}=u^{\alpha}u^{\beta}/c^2-g^{\alpha\beta}$, we find
\begin{equation}
    \frac{\lambda}{c}(e+p+\Pi)
       \delta u^{\beta}\nu_{\beta}
    -\delta(p+\Pi)=0.
\end{equation}
Together with Eq.~\eqref{eq:vdu}, this gives
\begin{equation}
\label{eq:dpP}
    \delta(p+\Pi)
    =\lambda^2(e+p+\Pi)\frac{\delta\rho}{\rho}.
\end{equation}
Applying the characteristic substitution to the differential part of the
left-hand side of Eq.~\eqref{eq:sys_Pi} gives
\begin{equation}
    -\lambda c\,\delta\Pi
    -\lambda c^3\rho\bigl(L_2+L_1\bar{\Pi}\bigr)\delta\gamma
    +\Pi\nu_{\alpha}\delta u^{\alpha}=0.
\end{equation}
For a nonzero characteristic root, division by $-\lambda c$ and use of
Eq.~\eqref{eq:vdu} lead to
\begin{equation}
\label{eq:dpi}
    \delta\Pi
    =-c^2\rho\bigl(L_2+L_1\bar{\Pi}\bigr)\delta\gamma
    +\Pi\frac{\delta\rho}{\rho}.
\end{equation}
For the thermodynamic pressure, from
$p=\rho c^2/\gamma$ we get
\begin{equation}
\label{eq:dp_thermo}
    \delta p
    =\frac{p}{\rho}\,\delta\rho
    -\frac{c^2\rho}{\gamma^2}\,\delta\gamma.
\end{equation}
Adding Eqs.~\eqref{eq:dpi} and \eqref{eq:dp_thermo}, substituting
Eq.~\eqref{eq:dgamma}, and comparing with Eq.~\eqref{eq:dpP}, we obtain
\begin{equation}
    \lambda^2(e+p+\Pi)
    =(p+\Pi)
    \left[
        1-\frac{1}{\gamma^2\omega'}
        -\frac{1}{\omega'}\bigl(L_2+L_1\bar{\Pi}\bigr)
    \right].
\end{equation}
Finally, using
\begin{equation}
    k=1-\frac{1}{\gamma^2\omega'},
    \qquad
    \hat c_V=-\gamma^2\omega',
\end{equation}
we find the two nonzero longitudinal characteristic velocities
\begin{equation}
\label{eq:lambda_pm}
    \lambda_{\pm}
    =\pm\sqrt{
        \frac{p+\Pi}{e+p+\Pi}
        \left[
            k+\frac{\gamma^2}{\hat c_V}
            \bigl(L_2+L_1\bar{\Pi}\bigr)
        \right]
    }.
\end{equation}
Equivalently, their common squared magnitude is
\begin{equation}
    \lambda^2
    =\frac{p+\Pi}{e+p+\Pi}
    \left[
        k+\frac{\gamma^2}{\hat c_V}
        \bigl(L_2+L_1\bar{\Pi}\bigr)
    \right].
\end{equation}
For completeness, the division by $\lambda$ made above isolates the
propagating longitudinal modes. Setting $\lambda=0$ directly in the principal
system gives four non-propagating material characteristic modes. Two of them
are transverse velocity perturbations satisfying
\[
    \nu_\alpha\delta u^\alpha=0,
\]
in addition to the normalization constraint
$u_\alpha\delta u^\alpha=0$. The remaining two are independent scalar
perturbations in $(\delta\rho,\delta\gamma,\delta\Pi)$ subject to the single
constraint
\[
    \delta(p+\Pi)=0.
\]
Hence, in the local rest frame, the complete characteristic spectrum consists
of
\begin{equation}
    \lambda=0 \quad \text{(multiplicity four)},
    \qquad
    \lambda=\lambda_{-},\lambda_{+}.
\end{equation}
The strict inequalities $0<\lambda_{+}^{2}<1$ are therefore precisely the
conditions ensuring hyperbolicity of the propagating sector and causality in
the local rest frame.
\section{Jacobian Matrix}
\label{app:jacobian}
In this Appendix, we explicitly derive the components of the Jacobian matrix evaluated at the de Sitter fixed point. For brevity, the evaluation of any quantity at this origin (setting $x=0$, $\xi=0$, and $\bar{\Pi}=0$) is denoted throughout by the subscript $0$, i.e., $|_0$. 

The late-time stability analysis requires expanding the right-hand sides of the dynamical system around the origin of the phase space. 
First, we note that as the scale factor diverges ($x \to 0$), the matter density $\rho=\rho_0(a_0x)^3$ vanishes. Therefore, the dynamical Hubble parameter can be expanded as:
\be
H(x,\xi) = \sqrt{\frac{8\pi G \rho_0 a_0^3}{3} x^3
\omega(\xi) + \frac{\Lambda c^2}{3}}
= H_{\text{dS}} + \mathcal{O}(x^3) \ .
\ee
This immediately yields the decoupled equation for the spatial sector at linear order:
\be
\dot{x} \simeq -H_{\text{dS}} \, x \quad \implies \quad \left. \frac{\partial \dot{x}}{\partial x} \right|_0 = -H_{\text{dS}} \ ,
\ee
which provides the $J_{00}$ component and the zeros in the first row and column of the Jacobian.

To regularize the thermodynamic sector, we perform an asymptotic expansion of the RET6 coefficients in the low-temperature limit, $\gamma \to \infty$ (i.e., $\xi \to 0$). For the diatomic equation of state considered in the stability analysis, the
low-temperature expansion yields:
\be
\omega(\xi) \approx 1 + \frac{5}{2}\xi + \mathcal{O}(\xi^2) \quad \implies \quad \left. \omega_{,\xi} \right|_0 = \frac{5}{2} \ .
\ee
Substituting this into the prefactor of the thermodynamic equations and evaluating it at the fixed point, we find:
\be
\left. \frac{3 H(x,\xi)}{\omega_{,\xi}} \right|_0 = \frac{6}{5} H_{\text{dS}} \ .
\ee
This allows us to linearize the evolution equation for $\xi$:
\be
\dot{\xi} \simeq -\frac{6}{5} H_{\text{dS}} (\bar{\Pi} + \xi) \ ,
\ee
from which we trivially extract the second row of the Jacobian:
\be
\left. \frac{\partial \dot{\xi}}{\partial \xi} \right|_0 = -\frac{6}{5} H_{\text{dS}} \ , \quad \quad \left. \frac{\partial \dot{\xi}}{\partial \bar{\Pi}} \right|_0 = -\frac{6}{5} H_{\text{dS}} \ .
\ee

Finally, we address the evolution equation for the dynamical pressure $\bar{\Pi}$. The presence of the $\xi^2$ term in the denominator of $\dot{\bar{\Pi}}$ is regularized by the asymptotic behavior of the coefficients $L_1(\xi)$ and $L_2(\xi)$. In the classical limit, the Taylor expansion of the ratio for diatomic gases yields:
\be
\lim_{\xi \to 0} \frac{L_1(\xi)}{\xi^2} = -\frac{9}{2} \ , \quad \quad \lim_{\xi \to 0} \frac{L_2(\xi)}{\xi^2} = \frac{2}{3} \ .
\ee
Consequently, one has:
\be
\frac{L_1 \bar{\Pi} + L_2}{\xi^2} \simeq -\frac{9}{2} \bar{\Pi} + \frac{2}{3} \ .
\ee
Substituting this expansion into the exact $\dot{\bar{\Pi}}$ equation, we obtain the linearized form:
\be
\dot{\bar{\Pi}} \simeq -\frac{6}{5} H_{\text{dS}} (\bar{\Pi} + \xi) \left( -\frac{9}{2} \bar{\Pi} + \frac{2}{3} \right) - \frac{\bar{\Pi}}{\tau} \ .
\ee
Evaluating the partial derivatives at the fixed point, we are left with:
\begin{align}
\left. \frac{\partial \dot{\bar{\Pi}}}{\partial \xi} \right|_0 &= -\frac{6}{5} H_{\text{dS}} \left(\frac{2}{3}\right) = -\frac{4}{5} H_{\text{dS}} \ , \\
\left. \frac{\partial \dot{\bar{\Pi}}}{\partial \bar{\Pi}} \right|_0 &= -\frac{6}{5} H_{\text{dS}} \left(\frac{2}{3}\right) - \frac{1}{\tau} = -\frac{4}{5} H_{\text{dS}} - \frac{1}{\tau} \ .
\end{align}
These derivatives constitute the third row of the Jacobian matrix.

\bibliographystyle{apsrev4-2}
\bibliography{Bib_GRET_submission}

\end{document}